\title{\bfseries
 Measures to characterise Approximate Mutually Unbiased Bases
}
\author{
Ajeet Kumar$^{1}$,
Uditanshu Sadual$^{2}$
}
\date{December 14, 2025}
\newcommand{\FullFootnoteRule}{%
  \kern -3pt
  \hrule width\textwidth height 0.4pt
  \kern 2.6pt
}
\newtheorem{theorem}{Theorem}[section]
\newtheorem{corollary}{Corollary}[theorem]
\newtheorem{lemma}[theorem]{Lemma}
\newtheorem{remark}{Remark}
\newtheorem{definition}{Definition}
\newtheorem{note}{Note}
\begin{document}

\maketitle

\thispagestyle{plain}
\makeatletter
\renewcommand{\footnoterule}{\FullFootnoteRule}
\makeatother
\footnotetext[1]{$^{}$Indian Statistical Institute, Kolkata}

\footnotetext[2]{$^{}$Indian Institute of Technology-Delhi,New Delhi}



\begin{abstract}
Mutually Unbiased bases has verious application in quantum information procession and coding theory. There can be maximum $d+1$ MUBs in $\mathbb{C}^d$ and $d/2+1$ MUBs in $\mathbb{R}^d$.  But , over $\mathbb{R}^d$ MUBs are known to be non existent when $d$ is odd and for most of the other even $d$ there are mostly 3 Real MUBs. In case of $\mathbb{C}^d$ the construction for complete set of MUBs are known for only Prime Power dimensionn. Thus in general large set of MUBs are not known, particularly for composite dimensions which are not of the form of prime powers.

Because of this, there are many constructions of Approximate version of MUBs. In this paper we make an attempt to define certain measures to characterise the AMUBs. Our construction of measures derives its inspiration from the applications of MUBs, and based on them, we define certain quantifiable measures, which are can be computed and gives estimates of how close the Approximate MUBs are to the MUBs. We use geometric interpretation, projective design features of MUBs and applications  like Optimal State determination and  Entropic Uncertainty of MUBs.

We show generic relationship between these measures and show that it can be evaluated for APMUBs without known the exact construction details, thereby showing that definition of APMUB is sufficient completely characterise it. We also evaluate these measure for an interesting class of AMUBs called Weak MUBs and certain AMUBs constructed using RBDs.

\end{abstract}

\section{Introduction}

Mutually Unbiased Bases (MUBs) are fundamental structures over Hilbert spaces that have found numerous applications in quantum information theory, including quantum tomography, state estimation, quantum key distribution, dense coding, and entanglement swapping. When constructed over $\mathbb{R}^d$, they yield {Real MUBs}, which have deep connections with quadratic forms \cite{cameron1991quadratic},  association schemes \cite{lecompte2010equivalence, delsarte1973algebraic}, equiangular lines, equiangular tight frames, fusion frames over $\mathbb{R}^d$ \cite{bodmann2018maximal}, mutually unbiased real Hadamard matrices, and bi-angular vectors over $\mathbb{R}^d$ \cite{holzmann2010real,best2013biangular,best2015mutually,kharaghani2018unbiased}, as well as applications in the construction of codes \cite{calderbank1997?4}.  

Further, several works have explored their geometric and combinatorial links with objects such as polytopes and projective planes \cite{bengtsson2005mubs,bengtsson2005mutually,saniga2004mutually,saniga2005hjelmslev,appleby2014galois}.

It is known that the maximum number of MUBs in $d$ dimensions is $d+1$ over $\mathbb{C}^d$, and $d^2+1$ over $\mathbb{R}^d$. The \cite{wootters1989optimal}  showed through a construction that for dimension of the form of  prime power $(d = p^m)$ such $(d+1)$ MUBs exist.  In fact till now the complete sets of MUBs over $\mathbb{C}^d$ are known to exist only when $d$ is a power of a prime. These are  precisely the dimensions for which finite affine planes (and hence finite projective planes) exist. This leads naturally to conjectures connecting the existence of complete sets of MUBs with that of finite projective planes. For a composite dimension $d = p_1^{n_1}p_2^{n_2}\cdots p_r^{n_r}$, the lower bound on the number of known MUBs is $\min(p_1^{n_1}, p_2^{n_2}, \ldots, p_r^{n_r}) + 1$. Constructing large numbers of MUBs for such composite dimensions has proven elusive, even over $\mathbb{C}^d$, and the situation is more restrictive over $\mathbb{R}^d$. Indeed, large sets of real MUBs are non-existent for most dimensions \cite{boykin2005real}; only when $d = 4s$ with $s > 1$ do we have $d^2 + 1$ MUBs, whereas for most non-square $d$, only two real MUBs are known.

To address this limitation, several constructions of {Approximate Mutually Unbiased Bases} (AMUBs) have been proposed over both $\mathbb{C}^d$ and $\mathbb{R}^d$, aiming to generate large families of nearly unbiased bases in all dimensions. When such constructions are specifically over $\mathbb{R}^d$, we refer to them as {Approximate Real MUBs (ARMUBs)} \cite{kumar2021approximate,kumar2022resolvable,yang2021constructions}. However, the term ``approximate'' has been interpreted differently across works, often inspired by the definitions first used in Approximate SIC-POVM literature. Various relaxations of the MUB condition have appeared such as $|\langle u, v \rangle| = \frac{1+o(1)}{\sqrt{d}}, |\langle u, v \rangle| = \frac{2+o(1)}{\sqrt{d}}$, $ \mathcal{O}{(\frac{\log d}{\sqrt{d}})}$, $ \mathcal{O}(\frac{1}{d^{1/4}})$, and $ \mathcal{O}(\frac{1}{\sqrt{d}})$. \cite{wang2018two,cao2016more,sripaisan2020approximately,li2015constructions,yang2021constructions}.

In this direction, we define two types of approximate mutually unbiased bases (AMUBs): the {$\beta$-AMUBs} and the {APMUBs}. 

The former, $\beta$-AMUBs, are approximate MUBs such that for any pair of vectors \( u, v \) belonging to different orthonormal bases, we have
$
|\langle u, v \rangle| = \frac{\beta}{\sqrt{d}},
$
where \( \beta \) is bounded by some constant for all \( d \). 
where \( \beta \) is bounded by some constant for all \( d \). 
Later, we define \textit{(APMUBs)} 
as those AMUBs satisfying
$
|\langle u, v \rangle| = \frac{\beta}{\sqrt{d}}, \hspace{0.5mm} where\ \beta = 1 + O(d^{-\lambda}) \leq 2,
$
for some constant \( \lambda > 0 \), and
$
\langle u, v \rangle \in \left\{ 0, \frac{\beta}{\sqrt{d}} \right\}
\quad
$ for all d.

Among different definition of Approximate MUBs, by different authors, the one that 
closely resembles with ours is the one defining Approximate MUBs as 
$
|\langle \psi_i^d | \psi_j^{d'} \rangle| \le \frac{1 + o(1)}{\sqrt{d}} .
$
Here we can interpret \( \beta = 1 + o(1) \). Since the term \( o(1) \to 0 \) as \( d \) increases, 
but the authors are not specifying how the term \( o(1) \) approaches zero. 
For example \( d^{-\lambda} = o(1), \lambda > 0 \), also \( (\log(d))^{-1} = o(1) \) 
or so does \( (\log(\log(d)))^{-1} = o(1) \), the later two approaching zero much slowly 
with increase in \( d \) than the first one. To qualify for APMUB, as per our definition, 
we have included the criteria \( \beta = 1 + \mathcal{O}(d^{-\lambda}) \), for some \( \lambda > 0 \), 
hence the term \( \mathcal{O}(d^{-\lambda}) = o(1) \) would approach to \( 0 \), reasonably fast 
with increase in \( d \). In other word, the convergence criteria for APMUB is more stringent 
than for AMUBs, with \( \beta = 1 + o(1) \). Thus \( \beta \) would approach to \( 1 \), reasonably 
fast for APMUB with increase in \( d \). In fact we would see that for most of our 
\textit{APMUB} construction, \( \lambda = \frac{1}{2} \), thus indicating decent rate of convergence of 
\( \beta \to 1 \) with increase in \( d \). Further, we have impose the condition APMUB that, 
\( \beta < 2 \), so that, APMUB is good approximation to MUBs even in lower dimension, and at no place, 
the inner product between the vectors of different basis, is more than twice that of MUBs.

All these definition of AMUBs, the focus has been on the \( |\langle u|v\rangle| \) and to what extent 
the condition of MUBs \( |\langle u|v\rangle| = \frac{1}{\sqrt{d}}\) can be relaxed. 
But to ascertain the closeness of Orthonormal bases to MUBs we need to define quantifiable measures 
which can give indications about deviations in theoretical/practical results in a setup when AMUBs are used 
from that when MUBs are used. Further these measures should be easily relatable to the quantities and 
properties under study using MUBs. And such measures should also be able to indicate the dependencies 
of the parameters of constructed AMUBs, to the ``closeness" of MUBs. As such information will enable one 
to get appropriate informations about parameters which can be altered to get AMUBs closer to MUBs.

For example Bengtsson~\cite{durt2010mutually} has considered distance measure between two orthonormal bases 
to quantify their closeness with MUBs. If the the two bases are MUBs then they have distance 1 and 
if they are identical then the distance measure is 0. In this distance measure, conceptually MUBs are most distant bases.

In this paper we identify other such measures for AMUBs which when known can give indications about the 
quality of AMUBs based on certain quantifiable aspects about the deviations from MUBs, if AMUBs are used 
instead of MUBs in the particle situations. We examine certain common applications of MUBs, and then identify 
certain measures which can indicate the deviations and hence the quality of AMUBs.

In particular, MUBs correspond to
tight projective 2-designs, form optimal packings in complex projective space, saturate
Welch-type bounds, and yield extremal configurations in association schemes. When one
replaces these exact structures with approximate analogues, the relevant quantities that determine operational performance are not the overlaps themselves but rather various
aggregate expressions built from them, such as quadratic Hilbert-Schmidt correlations,
deviations from tight-frame identities, and perturbations of the Gram matrices that
encode the geometry of basis projectors.

Motivated by these observations, we analyse several canonical contexts in which exact
MUBs are known to satisfy extremal or optimality conditions. These include the
Hilbert-Schmidt embedding of rank-one projectors, the Grassmannian geometry of
subspace packings, spherical 2-design conditions, and optimal state reconstruction formulas used in quantum state tomography, the entropy uncertainty relations and quantum key distribution in cryptographic protocols. In each case, we extract natural scalar quantities that measure the departure of a collection of bases from satisfying the identities or
equalities characteristic of perfect mutual unbiasedness. These quantities serve as
intrinsic, basis-independent diagnostics for the quality of an AMUB construction, and
they allow meaningful comparison between constructions arising from algebraic,
combinatorial, or analytic techniques.

The main contributions could be seen in a way that we introduce a family of geometric and operator-theoretic measures that quantify the
deviation of an AMUB family from the defining identities of exact MUBs. These measures
are derived from structural equalities satisfied by MUB projectors, such as orthogonality of traceless components, tight-frame relations, and Grassmannian distance
characterisations, and therefore provide mathematically principled indicators of
approximate unbiasedness. We apply the proposed measures to several representative applications in which MUBs play an extremal role, including projective 2-design conditions, optimal state
reconstruction formulas, and pairwise separation in projective packing problems. Through these analyses, we identify precisely which aspects of MUB optimality are preserved under approximate unbiasedness and which require stronger control over the approximation
parameters. This clarifies the operational and geometric limitations of AMUBs and provides guidance for constructing AMUB families whose performance is close to that of exact MUBs. Together, these results establish a unified mathematical framework for evaluating the
quality of AMUB constructions through quantities that reflect the structural, geometric, and operational features that make MUBs central to quantum information theory.

\section{Identifying  measures to characterising AMUBs }
\label{sec:motivation}

%
%

Mutually unbiased bases are at the heart of the complementarity principal of Quantum Physics. Over a course of time, the importance of MUBs have grown in Quantum Information processing and quantum cryptography.  In this section we delve into few of the important applications, where MUBs are being used, and we examine the property of MUBs which are relevant for such application. It will provide motivation  in define measures, to quantify the deviation in such property, when in place of MUBs, certain approximate version of them are used.  These quantities  in a way will provide measures  to ascertain the closeness approximate version (AMUBs) with that of MUBs, in a sense that it would be an indicator of deviation in the out come, if in place of MUBs, we use AMUBs.

\begin{itemize}

\item QKD: Quantum Key distribution protocol is one of the first practical application of Quantum System based cryptography. Introduced in 1984 by Bennet and Brassard, uses qubit system for encoding information. It uses MUBs in $d= 2$ dimension. The important aspect in QKD, is the Key  Rate, which referes to the notion of  average number of bits of information recovered from each single qubit sent. Various factor decides the actual Key rate in any practical QKD, but it is well established that if one uses qudits, in place of qubits, key rate is higher. Further the quantum bit error rate (QBER) threshold, which is related to the security threshold of the protocol, increases where all the three MUBs are used in $d=2$, called six state protocol akin to normal BB84 protocol. In addition to this six state protocol has more noise tolerance than normal BB84 protocol. In general using more number of MUBs in higher dimension, one can theoretically show that it has higher bit rate and more tolerant to the errors. Hence Number of  MUBs is important measure in any dimension.

We consider a prepare-and-measure QKD protocol in which the legitimate parties
(Alice and Bob) use \(k\) orthonormal bases \(\{M_1,\dots,M_k\}\) of \(\mathbb{C}^d\).
Alice chooses uniformly at random a basis \(M_\ell\) and a basis vector from that
basis to prepare and send; Bob measures in a basis chosen uniformly at random from
the same \(k\) bases. We analyse (i) the average number of classical bits obtainable
per transmitted system (the raw key rate per signal), and (ii) the disturbance
induced by a simple intercept-resend attack in which an adversary Eve measures the
signal in a basis chosen uniformly from the same \(k\) bases and resends her outcome.

\textbf{Fact1:}[Raw bits per transmitted system]
In the protocol above, each time Alice and Bob use the same basis they obtain
\(\log_2 d\) classical bits from the single \(d\)-level system \cite{bechmann2000quantum2}. Since their bases
coincide with probability \(1/k\), the average number of raw bits obtained per
transmitted system equals
\[
R_{\text{raw}} \;=\; \frac{1}{k}\,\log_2 d.\] \cite{cerf2002security} 

In particular, for fixed \(k\) the raw bits per transmitted system is a strictly
increasing function of the local dimension \(d\).

\begin{proof}
When Alice and Bob have used the same basis, the measurement outcome at Bob is
one of \(d\) equiprobable labels and so carries \(\log_2 d\) bits. The probability
that Bob's basis equals Alice's basis is \(1/k\) (uniform independent choices),
hence the average number of bits per transmitted system is \((1/k)\log_2 d\).
Monotonicity in \(d\) is immediate because \(\log_2 d\) is strictly increasing.
\end{proof}

\textbf{Fact2:}[Detectability increases with \(d\) and \(k\)] \cite{fuchs1996quantum}
Under the intercept-resend attack described above, condition on the ``sifted''
events (Alice and Bob used the same basis). Then the sifted error rate introduced
by Eve equals
\[
E_{\mathrm{sift}} \;=\; \frac{k-1}{k}\left(1-\sum_{j=1}^d p_j^2\right),
\]
where for fixed Alice state \(|\psi\rangle\in M_\ell\) and a (different) basis
\(M_m\) the numbers \(p_j = |\langle \phi_j|\psi\rangle|^2\) are the overlap probabilities
with vectors \(\{|\phi_j\rangle\}_{j=1}^d\) of \(M_m\). In the special case of
mutually unbiased bases (MUBs) we have \(p_j=1/d\) for all \(j\), and hence
\[
E_{\mathrm{sift}}^{\mathrm{MUB}} = \frac{k-1}{k}\Big(1-\frac{1}{d}\Big).
\]
Therefore, for the intercept-resend attack the sifted error \(E_{\mathrm{sift}}^{\mathrm{MUB}}\)
is strictly increasing in both \(d\) (for \(d\ge 2\)) and \(k\) (for \(k\ge 2\)).\cite{cerf2002security}

\begin{proof}
Fix Alice's prepared state \(|\psi\rangle\in M_\ell\). If Eve measures in the same
basis as Alice then no disturbance is introduced. If Eve measures in a different
basis \(M_m\) then she obtains outcome \(j\) with probability \(p_j\) and resends
the post-measurement state \(|\phi_j\rangle\) to Bob. When Bob subsequently
measures in Alice's basis, the conditional probability that he recovers the original
label equals \(\sum_j p_j^2\) (product of the probability Eve produced outcome \(j\)
and the probability Bob obtains Alice's label from \(|\phi_j\rangle\)). Thus the
conditional error given Eve chose the wrong basis is \(1-\sum_j p_j^2\). Since Eve
chooses a basis different from Alice's with probability \((k-1)/k\), averaging
over Eve's basis choice yields the stated formula.

For exact MUBs, \(p_j=1/d\) for all \(j\), so
\(\sum_j p_j^2 = d(1/d)^2 = 1/d\), and therefore
\(E_{\mathrm{sift}}^{\mathrm{MUB}} = \frac{k-1}{k}(1-1/d)\). To see monotonicity:
\[
\frac{\partial}{\partial d}\Big(1-\frac{1}{d}\Big) = \frac{1}{d^2} > 0,\qquad
\frac{\partial}{\partial k}\frac{k-1}{k} = \frac{1}{k^2} > 0,
\]
hence \(E_{\mathrm{sift}}^{\mathrm{MUB}}\) increases with both \(d\) and \(k\).
\end{proof}

\textbf{Corollary (BB84 and six-state protocol)} \cite{nielsen2002quantum} \cite{scarani2009security}
For BB84 one has \(d=2, k=2\) and thus \(R_{\text{raw}}=(1/2)\log_2 2 = 1/2\) bit per
transmitted qubit and \(E_{\mathrm{sift}}= \tfrac12(1-1/2)=1/4\) for intercept-resend.
If instead the six-state protocol (three MUBs in \(d=2\)) is used, then
\(k=3\) and the error becomes \(E_{\mathrm{sift}}=\tfrac23(1-1/2)=1/3>1/4\), showing
that the six-state protocol produces a larger detectable disturbance under this
attack and hence a higher tolerance to noise in this attack model.\\

\item Geometrical aspect of MUBs in $\mathbb{C}^d$: 

A useful way to understand mutually unbiased bases is to place quantum states inside a
geometric framework. Any quantum state is represented by a density matrix, that is, a positive semidefinite Hermitian operator with unit trace. The positivity condition ensures that measurement probabilities are always nonnegative, the trace-one condition ensures
they sum to unity, and Hermiticity guarantees that the eigenvalues are real. The set of all such density matrices forms a
convex body of real dimension \(d^{2}-1\). Pure states correspond to its extreme
points: the rank-one projectors \(P = |\psi\rangle\langle\psi|\).

To obtain a linear structure, it is convenient to shift the maximally mixed state
\(\tfrac{1}{d}I\) to the origin. Every state \(M\) is then represented by the traceless
operator
\[
m = M - \frac{1}{d}I.
\]
This transformation places all quantum states inside a real vector space of dimension
\(d^{2}-1\). The natural choice of inner product in this space is the Hilbert--Schmidt
inner product,
\[
m_{1}\cdot m_{2} = \frac{1}{2}\,\mathrm{Tr}(m_{1}m_{2}),
\]
which equips the space with the structure of a Euclidean vector space.

Under this representation, each pure state projector \(P = |\psi\rangle\langle\psi|\)
corresponds to a vector
\[
p = P - \frac{1}{d}I
\]
of fixed length. An orthonormal basis
\(\mathcal{B}=\{|\psi_{i}\rangle\}_{i=1}^{d}\)
therefore determines \(d\) such vectors \(\{p_{i}\}\), which span a \((d-1)\)-dimensional
subspace of this Euclidean space.

The key geometric insight, emphasized in the work of Bengtsson ~\cite{bengtsson2005mutually}, is that the notion of mutual unbiasedness becomes an
orthogonality relation in this vector space. Indeed, for two bases
\(\mathcal{B}_{1}=\{|\psi^{(1)}_{i}\rangle\}\) and
\(\mathcal{B}_{2}=\{|\psi^{(2)}_{j}\rangle\}\), the condition
\[
|\langle\psi^{(1)}_{i}|\psi^{(2)}_{j}\rangle|^{2}=\frac{1}{d}
\]
is equivalent to the geometric orthogonality relation
\[
p^{(1)}_{i}\cdot p^{(2)}_{j}=0.
\]
Thus, each orthonormal basis corresponds to a \((d-1)\)-dimensional ``plane,'' and two
bases are mutually unbiased precisely when the planes they determine are orthogonal.

Geometric Formulation:
The geometric approach to Mutual Unbiasedness operates within the real vector space
of traceless Hermitian operators. This formulation allows the characterization of two
orthonormal bases, $\mathcal{M}_l = \{|\psi_i^l\rangle\}$ and $\mathcal{M}_m = \{|\psi_j^m\rangle\}$,
by measuring the distance between their corresponding projective lines in this space.

For a pure quantum state defined by the projector $P = |\psi\rangle\langle \psi|$, the associated
traceless operator $m$ (relative to the maximally mixed state $\frac{1}{d}I$) is defined as:
\[
m = P - \frac{1}{d}I.
\]

The inner product in this vector space is defined by the trace:
\[
m_1 \cdot m_2 = \frac{1}{2}\mathrm{Tr}(m_1 m_2).
\]
The condition for two vectors to be unbiased
($|\langle \psi_i^l | \psi_j^m \rangle|^2 = 1/d$)
is equivalent to the orthogonality condition
$m_i^l \cdot m_j^m = 0$.

\textbf{Equivalence to the Bengtsson Distance Term}

We now prove that the deviation measure $\gamma^2_{l,m}$, which defines the Bengtsson
distance \cite{durt2010mutually} $D^2$, is directly proportional to the squared
geometric correlation of the associated traceless operators.

The Bengtsson distance $D^2_{l,m}$ is defined based on minimizing the term:
\[
\gamma^2_{l,m} = \sum_{i,j=1}^{d}
\left( \, |\langle \psi_i^l | \psi_j^m \rangle|^2 - \frac{1}{d} \, \right)^2.
\]

The MUB cost term $\gamma^2_{l,m}$ is proportional to
the sum of the squared geometric correlations between the traceless operators associated
with the two bases $\mathcal{M}_l$ and $\mathcal{M}_m$:
\[
\gamma^2_{l,m} = 4 \sum_{i,j=1}^{d} ( m_i^l \cdot m_j^m )^2.
\]

\begin{proof}
We begin by establishing a fundamental relationship between the dot product of
the traceless operators and the deviation from the ideal MUB density $1/d$.

Consider two traceless operators, $m_i^l$ and $m_j^m$, associated with
vectors $|\psi_i^l\rangle$ and $|\psi_j^m\rangle$. We compute their geometric inner product:
\[
m_i^l \cdot m_j^m = \frac{1}{2} \mathrm{Tr}
\left[
\left( P_i^l - \frac{1}{d}I \right)
\left( P_j^m - \frac{1}{d}I \right)
\right].
\]
Expanding the expression:
\[
\begin{aligned}
m_i^l \cdot m_j^m &=
\frac{1}{2}\mathrm{Tr}
\left(
P_i^l P_j^m
- \frac{1}{d} P_i^l
- \frac{1}{d} P_j^m
+ \frac{1}{d^2} I
\right) \\
&= \frac{1}{2}
\left[
\mathrm{Tr}(P_i^l P_j^m)
- \frac{2}{d} \mathrm{Tr}(P)
+ \frac{1}{d}
\right].
\end{aligned}
\]
Using the properties $\mathrm{Tr}(P)=1$, $\mathrm{Tr}(I)=d$,
and $\mathrm{Tr}(P_i^l P_j^m) = |\langle \psi_i^l | \psi_j^m \rangle|^2$, we obtain:
\[
m_i^l \cdot m_j^m
= \frac{1}{2}
\left[
|\langle \psi_i^l | \psi_j^m \rangle|^2 - \frac{1}{d}
\right].
\]
Rearranging this relationship yields:
\[
|\langle \psi_i^l | \psi_j^m \rangle|^2 - \frac{1}{d}
= 2( m_i^l \cdot m_j^m ).
\]
Substituting this identity into the expression for $\gamma^2_{l,m}$:
\[
\begin{aligned}
\gamma^2_{l,m}
&= \sum_{i,j=1}^{d}
\left[ 2( m_i^l \cdot m_j^m ) \right]^2
= 4 \sum_{i,j=1}^{d} ( m_i^l \cdot m_j^m )^2.
\end{aligned}
\]
This completes the proof.
\end{proof}

This proposition proves that the problem of minimizing the MUB cost $\gamma^2_{l,m}$
is exactly equivalent to minimizing the total squared geometric correlation between the
traceless operators representing the two bases. Since the Bengtsson distance is defined as
\[
D^2_{l,m} = 1 - \frac{1}{d-1}\gamma^2_{l,m},
\]
maximizing the distance $D^2_{l,m}$ provides a rigorous and geometrically meaningful criterion for achieving optimal separation of bases within the Grassmannian manifold of quantum states.

\item Optimal state determination:  Another important practical application which was shown for MUBs, which lead to the vigorous study of MUB and construction of complete set of MUBs for prime power dimension, were application in  state determination by \cite{ivonovic1981geometrical,wootters1989optimal}. The general state of an arbitrary quantum system in $d$ dimensional Hilbert Space, is represented by density matrix $W$, which are positive hermitian matrices such that Tr(W)=1.  Such density matrix is specified by $d^2-1$ real parameters.  And using any $d$-dimensional, orthogonal non degenerate projective measurement  one can build statistics, which will yield probabilities having $d$ outcome which sums to 1, hence will yield $d-1$ real numbers which can be related to the density matrix $W$. Hence one would require $\frac{d^2-1}{d-1} = d+1$  independent $d$-dimensional, orthogonal non degenerate projective measurements to get complete information about density matrix $W$. It is convenient to define $Y = W- \frac{I}{d}$ , which is traceless hermitian matrix. Let $T$ denote set of all traceless Hermitian matrix. Note that the set $T$ form a vector space under matrix addition and is equipped with inner product, given by Trace of the product of two such matrices in the set.

It is known that if one can construct $d+1$  mutually unbiased bases  for corresponding to $d$ dimensional Hilbert Space, $\mathbb{C}^d$, then the projective measurements corresponding to these MUBs provide an optimal means of determining the density matrix of ensemble of identically prepared quantum system, each having $d$ orthogonal states. Here optimal is in the sense that the effects of statistical error are minimised. The minimal statistical error is achieved by measuring the probabilities of $d$ outcome  corresponding to the  operators $\{ P_i^{r}\}_{r=1,..d}$ in respect of the $i^{th}$ MUB, which project the $W$ on to $r^{th}$ Eigen state of the $i^{th}$ MUB.  Crucial in the argument of minimal statistical error  is the maximisation of $ Log( Vol(T_1, . . . . T_{d+1}))$ \cite[Equation 6]{wootters1989optimal}  where $T_i$'s $(d-1)$- dimensional subspace of  traceless Hermitian matrices, and .  The set  $T = \{T_1, . . . . T_{d+1}\}$ represent $(d+1)$ subspaces of $(d^2-1)$ dimensional vector space in $\mathbb{R}^{d^2-1}$,  and $Vol(T_1, . . . . T_{d+1}) $ is the volume of an $(d^2 -1)$-dimensional parallelepiped  constructed by choosing  orthonormal basis for each of the $(d+1)$ subspaces  in set $T$ and the elements of these bases be edges of the parallelepiped. As noted in \cite{wootters1989optimal} the volume of this parallelepiped depends only on the geometrical relationships among the subspaces. It is shown \cite[Equation 9]{wootters1989optimal} that if each of the subspaces in set $\{T_1, . . . . T_{d+1}\}$ corrospondes to an MUBs, hence total $d+1$ MUBs then the $ Vol(T_1, . . . . T_{d+1})$ is maximised, which gives minimal statistical error in the determination of the states. Here the crucial aspect is that any pair of Traceless projectors corresponding to different subspaces would be orthogonal if the subspaces corresponding to to them from MUBs. To see this (as illustrated in \cite[Equation 9]{wootters1989optimal}) let $P_i^r = \left(\ket{\psi_i^r}\bra{\psi_i^r} - \frac{1}{d}I \right)$ then $Tr(P_i^r P_j^s)= |\braket{\psi_i^r|\psi_j^s}|^2 - \frac{1}{d} $ and this quantity is zero when $\ket{\psi_i^r}$ and $\ket{\psi_j^s}$ corresponds to MUBs. And as stated in the ensuing comments of \cite[Equation 9]{wootters1989optimal}, if there do not exist $(d+1)$ such  MUBs, then it will not be possible to make the subspaces
$T_1 ,T_2 \ldots T_{d + 1}$ orthogonal to each other, and the problem of maximising the information with $N + 1$ measurements will be considerably more complicated. 

But one can get approximate value of the decrease in the volume of parallelepiped $ Vol(T_1, . . . . T_{d+1})$ by noting that volume of such parallelepiped is given by $\sqrt{Det(V^\dag V)}$, where $V$ is matrix consisting of the columns of units vectors, forming the edges of the parallelepiped, passing through origin.  For this choose any orthonormal basis for each of the spaces corresponding to $T_1, T_2,\ldots T_{n+1}$ and let the elements of the bases be the edge of the parallelepiped.  But now the vectors spanning $T_i$ is constructed  from AMUBs, in the similar manner as constructed from MUBs.  Considering the fact that $P_i^r = \left(\ket{\psi_i^r}\bra{\psi_i^r} - \frac{1}{d}I \right)$, whose magnitude is $Tr(P_i^r P_i^r )= 1- \frac{1}{d}= \frac{d-1}{d}$. Hence Normalizing them so that they are of unit length, we re-define $P_i^r =\sqrt{\frac{d}{d-1}}\, \left(\ket{\psi_i^r}\bra{\psi_i^r} - \frac{1}{d}I \right)$. Note that  $(P_i^r)^\dag = P_i^r$, hence the matrix $V^\dag = V$. Further note that $\sum_{i=1}^d P_i^r  = 0$, thus only $d-1$ of then are linearly independent. Thus one can choose any $d-1$ unit vectors from them.  With this the diagonal elements of $Det(V^\dag V)$ is $1$ and all the off diagonal term is of magnitude $\frac{(\beta^{lm}_{ij})^2-1}{d-1}$ where $|\braket{\psi_i^l|\psi_j^m}| = \frac{\beta^{lm}_{ij}}{\sqrt{d}}$. Note that when we have MUBs then $\beta^{lm}_{ij} = 1$, for $l\neq m$ and $\beta^{lm}_{ij} = 0$ for $ l=m$ and $i\neq j$. Hence, in this case $V^\dag V$ is block diagonal matrix, consisting of $(n+1)$ identical  blocks each of size $(n-1)\times (n-1)$. On the other hand in case of AMUBs, the $\beta^{lm}_{ij} \leq  \tau$, for $l\neq m$ and $\beta^{lm}_{ij} = 0$ for $ l=m$ and $i\neq j$. Hence, in this case also the block diagonals of the matrix $V^\dag V$, consisting of $(n+1)$ identical  blocks each of size $(n-1)\times (n-1)$, but the off diagonal block elements are not zero.

Using these features, one can estimate the  the term corresponding to $Log(Vol(T_1, . .$\newline$  . . T_{d+1}))= \frac{1}{2}Log\left(Det(V^\dag V) \right)$.  Now the expression for determinant can be ascertained by knowing the exact entries of the matrix, as stated above. But the decrease in the volume, compared to the MUBs situation, can be approximately seen by using the notion that $Tr(P_i^r P_i^s)$ is $cos(\theta)$, where $\theta$ is the angle between the projectors constructed belongings to different orthonormal basis. Hence, 

\[cos(\theta)= |\braket{\psi_i^l|\psi_j^m}|^2 - \frac{1}{d} 
\Rightarrow sin^2(\theta) = 1-  (|\braket{\psi_i^l|\psi_j^m}|^2 - \frac{1}{d})^2\] 

\[ \text{Now\ if\ the\ } V_f= Vol(T_1, . . . . T_{d+1}) \text{\ then\ } V_f^2 \approx V_{MUB}^2 \times \Pi_{i,j}^{l,m} \left(1-  (|\braket{\psi_i^l|\psi_j^m}|^2 - \frac{1}{d})^2\right).\]

\[\text
{Hence\ the\ } Log\left( Vol(T_1, . . . . T_{d+1})\right)\approx  Log (V_{MUB})+ \sum_{i,j}^{l,m} Log\left(1-  \left(|\braket{\psi_i^l|\psi_j^m}|^2 - \frac{1}{d}\right)^2\right)\] 
\[ \Rightarrow  Log (V_f/V_{MUB})= \sum_{i,j}^{l,m} Log\left(1-  |\braket{\psi_i^l|\psi_j^m}|^4 - \frac{1}{d^2}+ \frac{2|\braket{\psi_i^l|\psi_j^m}|^2}{d} \right).\]

One can get an further approximation by using the expansion of \[Log(1-x) = - (x +\frac{x^2}{2}+ \frac{x^3}{3}+\ldots)\] 
Hence  $ -Log (V_f/V_{MUB}) \approx  \sum_{i,j}^{l,m} |\braket{\psi_i^l|\psi_j^m}|^4+  \sum_{i,j}^{l,m} |\braket{\psi_i^l|\psi_j^m}|^6 \ldots $

\item Entropy uncertainty measure. Mutually Unbiased bases are the related to the complementarity aspect of the quantum physics. Which in turn are fundamental to the uncertainty principals in quantum mechanics. The   
uncertainty relations have emerged as the central ingredient in the security analysis of almost all quantum cryptographic protocols, such as quantum key distribution and two-party quantum cryptography. 

Entropic uncertainty relations are an alternative way to state Heisenberg's uncertainty principle which are more useful characterisation, as in this case the uncertainty is lower bounded by a quantity that does not depend on the state to be measured.

Here result by Maassen and Uffink is significant which says that for any two orthonormal basis $B_1$ and $B_2$ in $\mathbb{C}^d$, for any pure state $\ket{\psi}$ in $\mathbb{C}^d$ 
$$
\frac{1}{2} \left( H(B_1 \ket{\psi}) + H(B_2 \ket{\psi}) \right) \geq -  Log(k)
$$
where $k = \max \{ |\braket{v_1|v_2}|: v_1\in B_1, v_2 \in B_2\}  $. Thus when $B_1$ and $B_2$ are MUBs then we have $\frac{1}{2} \left( H(B_1 \ket{\psi}) + H(B_2 \ket{\psi}) \right) \geq \frac{Log(d)}{2}$

When we restrict to a pair of basis, the Massen Uffinik result clearly shows that MUBs are most incompatible if we take entropic uncertainty as measure of incompatibility. But there are results which shows that a set of MUBs, in $\mathbb{C}^d$ are not always most incompatible, when considering more than two observables, according to entropic uncertainty. In fact there exist $\sqrt{d}$ many  MUBs in square dimension, which are not as incompatible as set of approximately $(log(d))^4$ bases, uniformly choose at random (Uniformity in sense of Haar Measure). It has been shown \cite{hayden2004randomizing} that if we choose approximately $(log d)^4$ bases uniformly at random, then
$ \frac{1}{m} \sum_{i=1}^m \left( H(B_1 \ket{\psi} \right) \geq log d-3. $ This means that there exist $m = (log d)^4$ bases for which this sum of entropies is very large, i.e., measurements in such bases are very incompatible on the other had there exist MUBs $\sim \mathcal{O}(\sqrt{d}) $ for which $\frac{1}{m} \sum_{i=1}^m \left( H(B_1 \ket{\psi}\right)\geq \frac{log d}{2} $
Thus in order  to define any measure for set of orthonormal basis to get some idea about its closeness to MUBs,  giving some idea about Entropic uncertainty relation, it is better to use Massen Uffinik result related to any pair of Orthonormal basis.

\item Frame potential measure: The application in quantum information processing, related to Error Correction, Random Access code etc are related to the spherical and the projective design. Mutually unbiased bases has found application related to these aspect.  It is well known that  complete set of MUBs for Projective 2 Design \cite{klappenecker2005mutually}. In general  a set of unit-length vectors forms a complex projective $t-$design, if sampling uniformly from the set gives rise to a random vector whose first $2t$ moments agree with the moments of the uniform distribution on the sphere. This property makes designs a useful tool for the derandomization of constructions that rely on random vectors. Since such projective t-design has various applications.  In \cite[Proposition 1]{zhu2016clifford} gives equivalent statements, when set of vectors constitute $t-$ design. One which can be easily computable is  characterisation through $t-$ Frame Potential which is defined as 
$$
\Phi_t(\{ \psi_j \} ) = \frac{1}{r^2} \sum_{j,k} |\braket{\psi_j|\psi_i}|^{2t}
$$
and Dimension of $Sym_t(\mathbb{C}^d)$, which is defined as  $ D_{[t]}={d+t-1 \choose t} $
As set of vector is $t$-design iff $t-$ Frame Potential is equal to $D_{[t]}$. In fact in general $t-$ Frame Potential $\geq D_{[t]} $. And equality holds only if it is $t-$ design.

\item Sparsity of orthonormal bases is an important feature of our construction. To characterize the sparsity of the MUBs/AMUBs/ARMUBs, we arrange the basis vectors as column of $d \times d$ matrix and use the standard measure of sparsity (denoted by $\epsilon$) as ``number of zero elements in the matrix divided by the total number of elements". The columns of the matrix consist of orthonormal bases vectors. Closer the value of $\epsilon$ to unity, more the number of zeros in the matrix. It will be shown that for our construction the sparsity in general varies as $\epsilon =1 -  \mathcal{O}(d^{-\frac{1}{2}})$. As noted previously we call a set of Orthonormal Basis as $\beta$-AMUB (Approximate MUB) if 
$|\braket{v_1|v_2}| \leq \frac{\beta}{\sqrt{d}}$, for any two vectors $v_1, v_2$ belonging to different bases, where $\beta$ is some constant.

\end{itemize}

\subsection{Defining Measures}

Based on the above observation , in order to measure the closeness of a pair of orthonormal bases $M_l$ and $M_m$ or a set of orthonormal bases $\mathbb{M}$ to  MUBs, we proposes following four measures for any pair (or set) of orthonormal basis to ascertain is closeness with a pair of MUBs ( or set of MUBs) . The focus is also on providing minimum number of such quantities as measures to  ascertain the closeness approximate version (AMUBs) with that of MUBs. We then provide the expression of commonly used measures, in terms of these set of measures. And we also indicate that other measures may be constructed, using these set of four measures.

Let $M = \{M_1,M_2,\ldots M_r \}$ be set of AMUBs. 
\begin{enumerate}

	\item Order of Set $M$: Since to get larger key rate, large number of MUBs are desirable, and in absence of large number of MUBs, we may use AMUBs, hence the number of AMUBs in a given dimension, would be an important measure. This simply count the number of AMUBs

	\item The  $t$-Coherance : We note above, in most of them  that the very critical measure to find any pair of   orthonormal basis closeness to the MUBs is the value of $|\braket{\psi_j|\psi_i}|$. Using this we define  its $t$-Coherance for any pair of Orthonormal basis $(l,m)$ as  $$
\Omega_t^{l,m}  =  \sum_{j,i} |\braket{\psi_j^l|\psi_i^m}|^{2t}
$$

	If the Bases are identical then $\Omega_t^{l,l} = d$. On the other hand if the two bases are MUBs then  $\Omega_t^{l,m} = d^{2-t}  $. Note that  $\Omega_1^{l,m}= d,\,\, \Omega_2^{l,m}= 1$ and for $t\geq 2$, a mutually unbiased pair of bases has the smallest $t$-Coherance. Thus we define a measure based on this and using the notation for AMUBs we express it as
$$
\Omega_t = \frac{2}{r(r-1)} \sum_{l > m} \sum_{j,i} |\braket{\psi_j^l|\psi_i^m}|^{2t} 
$$

\item One can define,
     \begin{equation}
        \mathbf{\tau}^{l,m} = \text{max}\left|\frac{1}{\sqrt{d}} - \left|\braket{\psi_j^l | \psi_i^m}\right|\right|, \forall\hspace{1mm}i,j 
     \end{equation}
     Note that, $\tau^{l,m} = \tau^{m,l}$. Extending for the set of orthonormal bases $\mathbb{M}$, we denote $\mathbf{\tau} =\max_{l \neq m}\{ \mathbf{\tau}^{l, m}\}$.
     \vspace{2mm}
	
	\item The spectrum of the set $\Delta$: Let us define the set for any pair of Orthonormal basis, consisting all the different value of $|\braket{\psi_j^l|\psi_i^m}|$. We define this as $$\Delta_{lm}= \{|\braket{\psi_j^l|\psi_i^m}|: \ket{\psi_j^l} \in M_l \,\text{and}\, \ket{\psi_j^m} \in M_m \}$$. And for a set of Orthonormal basis we define $$\Delta = \bigcup_{l,m} \Delta_{l,m}$$.

\end{enumerate}

Now we show that some of the measures which are commonly used in literature can be expressed using measure defined above.

\begin{enumerate}
	\item Bengtsson \cite{durt2010mutually} has considered distance measure between two orthonormal bases to quantify their closeness with MUBs. If the the two bases are MUBs then they have distance 1 and if they are identical then the distance measure is 0.  Distance between orthonormal bases $M_l , M_m$ is defined as:
\begin{equation}
D^2_{l,m}= 1 - \frac{1}{d-1} \sum_{i,j = 1}^d \left(|\braket{\psi_j^l | \psi_i^m}|^2 - \frac{1}{d} \right)^2
\end{equation}

Note that $D^2_{l,m} =D^2_{m,l}$ and $D^2_{l,l} =D^2_{m,m}=0$. If $M_l$ and $M_m$ are mutually unbiased then $D^2_{l,m} = 1 $. In general $0\leq D^2_{l,m} \leq 1$.	In this distance measure, conceptually MUBs are most distant bases.  Hence using above we have
$$D^2_{l,m}= \frac{1}{d-1}( d - \Omega^{l,m}_2)
$$
	
For a pair of MUB, $\Omega^{l,m}_2=1 \Rightarrow D^2_{l,m}=1$ as noted above. For a set for $r$ bases, Average Square Distance (ASD), denoted as $\bar{D}^{2}$ is defined by the Bengtsson \cite{durt2010mutually} as the average of distance over all pair of orthonormal bases in the set. Hence,
\begin{equation}
\bar{D}^2 =\frac{1}{r^2} \sum_{l,m =1}^r D^2_{l,m} = \frac{2}{r(r-1)} \sum_{l<m =1}^r D^2_{l,m}
\end{equation}
	
On occasions, more appropriate measure of the distance, for the set of orthonormal bases ($\mathbb{M}$), would be to define  $\mathbf{D^2} =\max_{l \neq m}\{ \mathbf{D^2}_{l, m}\}$. Note that  $\mathbf{D^2} \geq \bar{D}^2$, whenever the set has more than two bases.

 \item The variance of the inner products between the vectors of 
    $M_l$ and $M_m$ from $\displaystyle\frac{1}{\sqrt{d}}$ can be considered. That is, one can compute the following quantity,
\begin{equation}
 \mathbf{\sigma}^{l,m} = \frac{1}{d} \sqrt{\sum_{i,j=1}^{d} \left(\frac{1}{\sqrt{d}} - \left|\braket{\psi_j^l | \psi_i^m}\right| \right)^2}    
\end{equation}
     
  to get an estimate of closeness of $M_l$ and $M_m$ to MUBs. Note that, $\sigma^{l,m} = \sigma^{m,l}$. Extending this for the set of orthonormal bases ($\mathbb{M}$), one can define $\mathbf{\sigma} =\max_{l \neq m}\{ \mathbf{\sigma}^{l, m}\}$.
  
   Again using the $t$-coherance we get 
   
  $$  \mathbf{\sigma}_{l,m}^2 =\frac{2}{d}\left( 1- \frac{\Omega_{\frac{1}{2}}^{l,m}}{d^{\frac{3}{2}}} \right) 
  $$
  Note that for MUBs $ \Omega_{\frac{1}{2}}^{l,m}= d^{\frac{3}{2}}$ hence $\mathbf{\sigma}_{l,m}^2 = 0 $ and for any pair of orthonormal basis, since $\mathbf{\sigma}_{l,m}^2 \geq 0 \Rightarrow  \Omega_{\frac{1}{2}}^{l,m} \leq d^{\frac{3}{2}}$.

\end{enumerate}

\section{ Certain general Relationship between different measure  }

Now let us introduce some notations from~\cite{kumar2022resolvable}.
A set of orthonormal bases would be denoted as $\mathbb{M} = \{M_1, M_2, \ldots, M_r\}$ (may not be MUBs) of dimension $d$, $\Delta$ denotes the 
set of different inner product values between the vectors from different orthonormal bases. That is, $\Delta$ contains the distinct 
values of $\left|\braket{\psi_i^l | \psi_j^m}\right|$ $\forall i, j \in \{1, 2, \ldots, d\}$ and $l \neq m \in \left\{1, \ldots, r\right\}$. 
It is to be observed that when $\mathbb{M}$ is an MUB, $\Delta$ is a singleton set with the only element $\frac{1}{\sqrt{d}}$. 
However, for the approximate ones, there will be more than one value in the set. 
The notation $\beta$-ARMUB (Approximate Real MUB) has been used~\cite{kumar2022resolvable} in the context of a set of orthonormal bases $\mathbb{M}$, where the maximum value 
in $\Delta$ is bounded by $\frac{\beta}{\sqrt{d}}$, where $\beta$ is some constant. Note that the definition is significant for situations, where $d$ can be increased and $\beta$ remains bounded by some constant. We will call such set of MUBs as $\beta$-AMUB (Approximate MUB) for both complex and real cases. If we need to emphasise something specific in the context of Approximate Real MUBs, then we will call it $\beta$-ARMUB.

Let $M_l$ and $M_n$ be any two orthonormal bases. Consider $\{\ket{\psi_{i}^{l}}:i=1,2,\ldots,d\}$ and $\{\ket{\psi_{j}^{m}}:j=1,2,\ldots,d\}$ to be the corresponding basis vectors. Expressing $\ket{\psi_{i}^{l}}$ as a linear combination of $\{\ket{\psi_{j}^{m}}:j=1,2,\ldots,d\}$, we get,

$$
\ket{\psi_{i}^{l}} = \alpha_{i1}\ket{\psi_{1}^{m}} + \alpha_{i2}\ket{\psi_{2}^{m}} + \ldots + \alpha_{id}\ket{\psi_{d}^{m}}, i = 1,2,\ldots,d
$$
Since Bases consist of unit vectors, i.e $\braket{\psi_{i}^{l}|\psi_{i}^{l}} = 1 \forall l,i$ hence,

$$
\braket{\psi_{i}^{l}|\psi_{i}^{l}} = |\alpha_{i1}|^{2} + |\alpha_{i2}|^{2} + \ldots + |\alpha_{id}|^{2} = \sum_{j=1}^{d}|\alpha_{ij}|^{2} = 1, i = 1,2,\ldots,d
$$
where $\alpha_{ij} = \braket{\psi_{i}^{l}|\psi_{j}^{m}},\forall\hspace{1mm}i,j \in \{1,2,\ldots,d\}$. Hence, 
\begin{equation} \label{eq-2}
\sum_{j=1}^{d}|\braket{\psi_{i}^{l}|\psi_{j}^{m}}|^{2} = 1 , \forall i = 1,2,\ldots,d
\end{equation}

\begin{equation}\label{eq-3}
    \Rightarrow  \sum_{i,j=1}^{d}|\braket{\psi_{i}^{l}|\psi_{j}^{m}}|^{2} = \sum_{i=1}^{d} \left(\sum_{j=1}^{d}|\braket{\psi_{i}^{l}|\psi_{j}^{m}}|^{2} \right) =\sum_{i=1}^{d}1=d
\end{equation}

Using this, we prove the following result, 

\begin{theorem}
\label{Th1}
 Let $M_l$ and $M_m$ be any two orthonormal bases. Consider $\{\ket{\psi_{i}^{l}}:i=1,2,\ldots,d\}$ and $\{\ket{\psi_{j}^{m}}:j=1,2,\ldots,d\}$ to be the corresponding basis vectors. Then for $\delta > 0$ 
\begin{align} 
d^{-\frac{\delta }{2}} \leq \sum_{j=1}^{d} |\braket{\psi_{i}^{l}|\psi_{j}^{m}}|^{2+\delta} \leq  \,\, &1 \leq\sum_{j=1}^{d} |\braket{\psi_{i}^{l}|\psi_{j}^{m}}|^{2-\delta}\leq d^{\frac{\delta }{2}}  \\
d^{1-\frac{\delta }{2}} \leq \sum_{i,j=1}^{d} |\braket{\psi_{i}^{l}|\psi_{j}^{m}}|^{2+\delta} \leq \,\, &d \leq \sum_{i,j=1}^{d} |\braket{\psi_{i}^{l}|\psi_{j}^{m}}|^{2-\delta}\leq d^{1+\frac{\delta }{2}}  
\end{align} 
\end{theorem}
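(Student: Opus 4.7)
The plan is to reduce everything to inequalities about a fixed probability distribution on $d$ outcomes. For each fixed $i$, set $a_j := |\braket{\psi_i^l|\psi_j^m}|^2$. Equation~\eqref{eq-2} states $\sum_{j=1}^d a_j = 1$, so $(a_1,\dots,a_d)$ is a probability vector and in particular $0 \le a_j \le 1$ for every $j$. In these variables the sums in the theorem become $\sum_j a_j^{1+\delta/2}$ and $\sum_j a_j^{1-\delta/2}$, and the entire statement reduces to estimates on $s$-th power sums of a probability vector of length $d$.

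First I would peel off the two ``easy'' bounds that sandwich $1$. Since $a_j \in [0,1]$ and $1+\delta/2 > 1$, we have $a_j^{1+\delta/2} \le a_j$, so summing yields $\sum_j a_j^{1+\delta/2} \le \sum_j a_j = 1$. Symmetrically, $1-\delta/2 < 1$ forces $a_j^{1-\delta/2} \ge a_j$ on $[0,1]$, hence $\sum_j a_j^{1-\delta/2} \ge 1$. These give the two middle inequalities of the first chain for free.

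The two extremal bounds both come from Jensen's inequality applied to the power function. For the convex exponent $s = 1+\delta/2 > 1$,
\[
\frac{1}{d}\sum_{j=1}^d a_j^{1+\delta/2} \;\ge\; \Bigl(\frac{1}{d}\sum_{j=1}^d a_j\Bigr)^{1+\delta/2} \;=\; d^{-1-\delta/2},
\]
which rearranges to $\sum_j a_j^{1+\delta/2} \ge d^{-\delta/2}$. For the concave exponent $s = 1-\delta/2 \in (0,1)$, Jensen reverses and gives $\sum_j a_j^{1-\delta/2} \le d \cdot d^{-1+\delta/2} = d^{\delta/2}$. Summing each of these four single-$i$ inequalities over $i = 1,\dots,d$ multiplies every bound by the factor $d$ and produces the second chain, so both lines are obtained from the same four elementary estimates.

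The main technical concern is the boundary regime $\delta \ge 2$, which pushes $1-\delta/2 \le 0$ and makes $a_j^{1-\delta/2}$ singular at $a_j = 0$; the natural reading is therefore $0 < \delta < 2$, with the convention $0^s = 0$ for $s > 0$ so that the concave Jensen step is clean. A remark worth flagging is the equality condition: Jensen is tight exactly when every $a_j = 1/d$, which is precisely the mutually unbiased case, so the theorem recovers the MUB values $\sum_j a_j^{1 \pm \delta/2} = d^{\mp \delta/2}$ as the unique simultaneous extremum, tying the statement back to the $t$-coherence measure $\Omega_t$ introduced in the previous section.
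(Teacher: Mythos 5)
Your proof is correct and, for the half of the argument that matters, takes a genuinely different route from the paper. The two middle inequalities (the comparison with $1$, and hence with $d$ after summing over $i$) are obtained identically in both proofs: from $\sum_j a_j = 1$ and $a_j\in[0,1]$ one gets $a_j^{1+\delta/2}\le a_j\le a_j^{1-\delta/2}$ termwise. The divergence is in the extremal bounds $d^{\mp\delta/2}$. The paper locates the constrained stationary point of $\sum_{i,j}|\braket{\psi_i^l|\psi_j^m}|^{2\pm\delta}$ by Lagrange multipliers, finds it at $|\braket{\psi_i^l|\psi_j^m}|=1/\sqrt{d}$, and then asserts that this is a global maximum (resp.\ minimum); strictly speaking this leaves a gap, since a stationary point of a constrained problem need not be a global extremum without a convexity or second-order argument. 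Your Jensen (power-mean) argument supplies exactly that missing convexity reasoning and closes the gap in one line, while also yielding the equality case $a_j\equiv 1/d$, i.e.\ the MUB configuration, as the unique simultaneous extremizer. Your flag on the regime $\delta\ge 2$ is also a substantive catch the paper misses: for $\delta>2$ the exponent $1-\delta/2$ is negative, $x\mapsto x^{1-\delta/2}$ is no longer concave-and-bounded near $0$, and the upper bound $d^{\delta/2}$ on $\sum_j a_j^{2-\delta}$ genuinely fails (take one overlap near $1$ and the rest near $0$). The theorem should be read with $0<\delta<2$ on the $2-\delta$ side, which covers every use the paper makes of it ($\Omega_{1/2}$ corresponds to $\delta=1$, and the fourth-moment bound lives on the $2+\delta$ side, which is fine for all $\delta>0$).
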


\begin{proof}
From equation \ref{eq-2}, we have,

$$
\sum_{j=1}^{d} |\braket{\psi_{i}^{l}|\psi_{j}^{m}}|^{2} = 1 
$$


 Since, $ 0 \leq | \braket{\psi_{i}^{l}|\psi_{j}^{m}}|  \leq 1,\,\, \forall \,\, i,j $, it follows that for $\delta >0$,

$$
 \sum_{j=1}^{d} \left( |\braket{\psi_{i}^{l}|\psi_{j}^{m}}| \right)^{2-\delta} > 1 ,\text{ and } \sum_{j=1}^{d} \left( |\braket{\psi_{i}^{l}|\psi_{j}^{m}}| \right)^{2+\delta} < 1
$$

Since this is true for all $j$, summing over $j = 1,2..d$ we get for $\delta >0$ 
$$
 \sum_{i,j=1}^{d}|\braket{\psi_{i}^{l}|\psi_{j}^{m}}|^{2-\delta} > d,   \text{and} \sum_{i,j=1}^{d}|\braket{\psi_{i}^{l}|\psi_{j}^{m}}|^{2+\delta} < d
$$

To estimate further bound on  $ \sum_{i,j=1}^{d}|\braket{\psi_{i}^{l}|\psi_{j}^{m}}|^{2\pm \delta}$, we  determine the extremal (stationary) points of $ \sum_{i,j=1}^{d}|\braket{\psi_{i}^{l}|\psi_{j}^{m}}|^{2\pm \delta}$ with respect to the constraint that $\sum_{j=1}^{d}|\braket{\psi_{i}^{l}|\psi_{j}^{m}}|^{2} - 1 = 0$ or equivalently $\sum_{i,j=1}^{d}|\braket{\psi_{i}^{l}|\psi_{j}^{m}}|^{2} - d = 0$ using the method of Lagrange's Multipliers. The stationary points occur at $|\braket{\psi_{i}^{l}|\psi_{j}^{m}}| = \frac{1}{\sqrt{d}},\forall\hspace{1mm}i,j$. The $ \sum_{i,j=1}^{d}|\braket{\psi_{i}^{l}|\psi_{j}^{m}}|^{2 - \delta}$, has a maximum at  stationary point on the other hand $ \sum_{i,j=1}^{d}|\braket{\psi_{i}^{l}|\psi_{j}^{m}}|^{2+ \delta}$,has a minimum at the stationary points for all $\delta > 0$. Therefore,

$$
\sum_{j=1}^{d}|\braket{\psi_{i}^{l}|\psi_{j}^{m}}|^{2-\delta} <  \sum_{j=1}^{d}\left(\frac{1}{\sqrt{d}}\right)^{2-\delta} = d . d^{-1+\frac{\delta}{2}}= d^{\frac{\delta}{2}} \,\,\, \text{and}$$

$$\sum_{j=1}^{d}|\braket{\psi_{i}^{l}|\psi_{j}^{m}}|^{2+\delta} >   \sum_{j=1}^{d}\left(\frac{1}{\sqrt{d}}\right)^{2+\delta} = d. d^{-1-\frac{\delta}{2}} = d^{\frac{-\delta}{2}} 
$$
Now summing over $j= 1,2..d$, we get 
$$
\sum_{i,j=1}^{d}|\braket{\psi_{i}^{l}|\psi_{j}^{m}}|^{2-\delta} <  d^{1+\frac{\delta}{2}},  \text{and}
\sum_{i,j=1}^{d}|\braket{\psi_{i}^{l}|\psi_{j}^{m}}|^{2+\delta} >    d^{1-\frac{\delta}{2}}
$$

Hence, combining above inequalities we get the desired result.
\end{proof}

\begin{corollary}
Let $M_l$ and $M_m$ be any two orthonormal bases. Consider $\{\ket{\psi_{i}^{l}}:i=1,2,\ldots,d\}$ and $\{\ket{\psi_{j}^{m}}:j=1,2,\ldots,d\}$ to be the corresponding basis vectors. Let $f = \sum_{i,j=1}^{d}|\braket{\psi_{i}^{l}|\psi_{j}^{m}}|^{2+\delta} $, then $\forall \hspace{1mm}\delta \in \mathbb{R}$ 

\begin{equation}
d^{1-\frac{| \delta |}{2}} \leq f = \sum_{i,j=1}^{d}|\braket{\psi_{i}^{l}|\psi_{j}^{m}}|^{2+\delta} \leq d^{1 + \frac{| \delta |}{2}}
\end{equation}
\end{corollary}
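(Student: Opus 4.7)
The plan is to obtain this corollary as a direct unification of the two chains of inequalities already established in Theorem \ref{Th1}, by considering the sign of $\delta$ separately and rewriting each case in terms of $|\delta|$. The case $\delta=0$ is handled immediately: here $f$ reduces to $\sum_{i,j}|\braket{\psi_{i}^{l}|\psi_{j}^{m}}|^{2}$, which equals $d$ by equation \eqref{eq-3}, and the claimed bounds collapse to the identity $d\leq d\leq d$.

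For $\delta>0$, I would quote the first chain in Theorem \ref{Th1} directly: it yields $d^{1-\delta/2}\leq f \leq d \leq d^{1+\delta/2}$. Since $|\delta|=\delta$ in this regime, the outer bounds coincide with $d^{1-|\delta|/2}\leq f \leq d^{1+|\delta|/2}$, as required.

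For $\delta<0$, I would reduce to the previous regime by setting $\delta'=-\delta>0$, so that $2+\delta=2-\delta'$ and hence $f=\sum_{i,j}|\braket{\psi_{i}^{l}|\psi_{j}^{m}}|^{2-\delta'}$. The second chain in Theorem \ref{Th1}, applied with $\delta'$ in place of $\delta$, then gives $d \leq f \leq d^{1+\delta'/2}=d^{1+|\delta|/2}$; combining with the trivial estimate $d^{1-|\delta|/2}\leq d \leq f$ closes this case.

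The only real obstacle is bookkeeping: one must be careful that the substitution $\delta\mapsto -\delta$ correctly interchanges the ``$2-\delta$'' and ``$2+\delta$'' halves of Theorem \ref{Th1}, and that the signs in the exponents of $d$ flip in the expected direction when $|\delta|$ replaces $\delta$. No new analytic input beyond Theorem \ref{Th1} and equation \eqref{eq-3} is needed; the corollary is essentially a cosmetic repackaging that lets the bound be quoted without having to track whether $\delta$ is positive or negative.
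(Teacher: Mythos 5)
Your proposal is correct and is essentially the derivation the paper intends: the corollary is stated without its own proof as an immediate consequence of Theorem \ref{Th1}, and your case split on the sign of $\delta$ (with the $\delta=0$ case settled by equation \eqref{eq-3}) is exactly the bookkeeping needed to merge the two chains of inequalities into the single $|\delta|$ form.
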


Based on above, we show following relationship between different measures.
     
 \begin{itemize}
 
       \item Note that, $\sigma^{l,m} \leq \tau^{l,m}$. This can be easily seen as 
        $$\left(d \mathbf{\sigma}^{l,m} \right)^2 = \sum_{i,j=1}^{d} \left(\frac{1}{\sqrt{d}} - \left|\braket{\psi_j^l | \psi_i^m}\right| \right)^2  \leq  \sum_{i,j=1}^{d}\left( \text{max}\left|\frac{1}{\sqrt{d}} - \left|\braket{\psi_j^l | \psi_i^m}\right|\right|\right)^2 =d^{2}(\mathbf{\tau}^{l,m})^{2} $$
        
        \begin{equation} \Rightarrow  \mathbf{\sigma}^{l,m} \leq \mathbf{\tau}^{l,m}\end{equation}
   
   It can be shown that,
        
        \begin{equation}
            1 - \frac{(d+\sqrt{d})^{2}}{d-1}(\mathbf{\sigma}^{l,m})^{2} \leq  D^{2}_{l,m} \leq 1 - \frac{d}{d-1}(\mathbf{\sigma}^{l,m})^{2}
        \end{equation}

    \begin{proof}
    
    Consider $$\sum_{i,j=1}^{d} \left(\frac{1}{d} - \left|\braket{\psi_j^l | \psi_i^m}\right|^{2} \right)^2  = \sum_{i,j=1}^{d} \left(\frac{1}{\sqrt{d}} - \left|\braket{\psi_j^l | \psi_i^m}\right| \right)^2\left(\frac{1}{\sqrt{d}} + \left|\braket{\psi_j^l | \psi_i^m}\right| \right)^2$$
    Note that, $\braket{\psi_i^l | \psi_i^l} = 1$, $\braket{\psi_i^m | \psi_i^m} = 1$ and $0 \leq \left| \braket{\psi_i^l | \psi_i^m} \right|\leq 1$.
    
    $$\Rightarrow \sum_{i,j=1}^{d}\left( \left|\braket{\psi_j^l|\psi_i^m}\right| - \frac{1}{\sqrt{d}}\right)^{2}\left(\frac{1}{\sqrt{d}}\right)^{2} \leq \mathbf{\gamma}^{2} \leq  \sum_{i,j=1}^{d}\left(\left|\braket{\psi_j^l|\psi_i^m}\right| - \frac{1}{\sqrt{d}}\right)^{2}\left(1 + \frac{1}{\sqrt{d}}\right)^{2}$$
    
    
    $$\Rightarrow d \left(\mathbf{\sigma}^{l,m}\right)^{2} \leq \mathbf{\gamma}^{2} \leq \left(d + \sqrt{d}\right)^{2}\left(\mathbf{\sigma}^{l,m}\right)^{2}$$
    
    Where $\mathbf{\gamma}^{2} = \sum_{i,j=1}^{d} \left(\frac{1}{d} - \left|\braket{\psi_j^l | \psi_i^m}\right|^{2} \right)^2 $
    
 Hence $ D^{2}_{l,m} = 1 - \frac{\mathbf{\gamma}^{2}}{d-1}$. Therefore
    
  
\begin{equation}  
 1 - \frac{(d+\sqrt{d})^{2}}{d-1}(\mathbf{\sigma}^{l,m})^{2} \leq  D^{2}_{l,m} \leq 1 - \frac{d}{d-1}(\mathbf{\sigma}^{l,m})^{2}
\end{equation}
 
    \end{proof}

This also implies that for large $d$, we have   
\begin{equation}
  1 - d \cdot (\sigma^{l,m})^{2} \leq \bar{D}^{2}_{l,m} \leq 1 - (\sigma^{l,m})^{2}
\end{equation}

\item If $(\mathbf{\sigma}^{l,m})= \mathcal{O}\left( \frac{1}{d^{\frac{1}{2} + \delta}}\right) $, where $\delta > 0$, then, $D^{2}_{l,m} \rightarrow 1$ as $d \rightarrow \infty$

    And similarly for set of Orthonormal Basis $\mathbb{M}$, it can be shown  that for large $d$,
\begin{equation}\Rightarrow 1 - d \cdot \mathbf{\sigma}^{2} \leq \bar{D}^{2} \leq 1 - \mathbf{\sigma}^{2}_{\text{min}}
\end{equation} 
where, $\mathbf{\sigma}_{\text{min}} = \min_{l \neq m}\{\mathbf{\sigma}^{l,m}\}$.
    
    \begin{proof}
    We have, $\mathbf{\sigma} = \max_{l \neq m}\{\mathbf{\sigma}^{l,m}\}$ and $\bar{D}^{2} = \frac{1}{r^2} \sum_{l,m =1}^r D^2_{l,m}$. Hence,
    
    $$\frac{1}{r^2}\sum_{l,m=1}^{r}(1 - d(\mathbf{\sigma}^{l,m})^{2}) \leq \bar{D}^{2} \leq \frac{1}{r^2}\sum_{l,m=1}^{r}(1 - (\mathbf{\sigma}^{l,m})^{2})$$
    
    $$\Rightarrow \frac{1}{r^2}\left( r^2 - d\mathbf{\sigma}^{2}r^2 \right) \leq \bar{D}^{2} \leq \frac{1}{r^2}\left( r^2 - \mathbf{\sigma}^{2}_{\text{min}}r^2 \right) $$
    
    $$ \Rightarrow 1 - d \cdot \mathbf{\sigma}^{2} \leq \bar{D}^{2} \leq 1 - \mathbf{\sigma}^{2}_{\text{min}}$$
   
    \end{proof}
\end{itemize}

\section{Bounds for Measures of $\beta$-Approximate MUBs}

In general the different measures as enumerated in the above section depends on the details of the construction of AMUBs. In this section we  derive bounds on the measures $\tau,\sigma$ and $\bar{D}^{2}$  for $\beta$-AMUBs and gives their asymptotic variation, which shows that for large $d$ the $\beta$-AMUBs is very close to MUBs. Lets first define $\beta$-AMUB
\begin{definition}
\label{defbetaAMUB}
The set $\mathbb{M} = \{M_1,M_2,\ldots,M_r\}$ will be called $\beta$-AMUB if $ \braket{\psi_i^l| \psi_j^m} \leq \frac{\beta}{\sqrt{d}} $ such that   $\beta$ is some constant independent of $d$. When the bases are real  (i.e., orthogonal bases), we call them $\beta$-ARMUBs. 
\end{definition}

\begin{itemize}

\item $ \tau  =  \max\{\left| \frac{1}{\sqrt{d}} - |\braket{\psi_{i}^{l}|\psi_{j}^{m}}| \right| \} $. Since $|\braket{\psi_{i}^{l}|\psi_{j}^{m}}| \leq \frac{\beta}{\sqrt{d}}$, and let $|\braket{\psi_{i}^{l}|\psi_{j}^{m}}| \geq \frac{\beta_{min}}{\sqrt{d}}$  then we have  $\tau  =  \max\{|  \frac{\beta}{\sqrt{d}}- \frac{1}{\sqrt{d}} |, | \frac{1}{\sqrt{d}} - \frac{\beta_{min}}{\sqrt{d}} |\} $. This implies that,

\begin{equation}
\label{tau-AMUB}
\Rightarrow \tau = 
\begin{cases}
\frac{\beta-1}{\sqrt{d}}, & \text{if }\beta +\beta_{min} \geq 2\\
\frac{1 -\beta_{min} }{\sqrt{d}}, & \text{if }\beta +\beta_{min} < 2
\end{cases}
\end{equation}

\vspace{2mm}

\item We have,
 $$d^{2}\sigma^{2} = \sum_{i,j=1}^{d}\left(\frac{1}{\sqrt{d}} - |\braket{\psi_{i}^{l}|\psi_{j}^{m}}|\right)^{2} = \sum_{i,j=1}^{d}\left(\frac{1}{d} - \frac{2}{\sqrt{d}} |\braket{\psi_{i}^{l}|\psi_{j}^{m}}| +  |\braket{\psi_{i}^{l}|\psi_{j}^{m}}|^{2}\right)$$
 $$ = d + d - \frac{2}{\sqrt{d}}\sum_{i,j=1}^{d} |\braket{\psi_{i}^{l}|\psi_{j}^{m}}| = 2\cdot \left(d - \frac{1}{\sqrt{d}}\sum_{i,j=1}^{d}|\braket{\psi_{i}^{l}|\psi_{j}^{m}}|\right)
$$.


%

We know that, $d^{\frac{1}{2}} \leq \sum_{i,j=1}^{d}|\braket{\psi_{i}^{l}|\psi_{j}^{m}}| \leq d^{\frac{3}{2}}$, 

\begin{equation}\label{sigma-AMUB}
    \Rightarrow 0 \leq \sigma^2 \leq \frac{2}{d}(1-\frac{1}{d})    
\end{equation}

Note that we have not used at any place that Set of  Orthogonal Bases are $\beta$- AMUB. Hence this relations holds for any pair Orthonormal bases, i.e for any pair of orthonormal basis $\sigma \leq \mathcal{O}(d^{-\frac{1}{2}})$. Thus only for those AMPUBs for which $\sigma< \mathcal{O}(d^{-\frac{1}{2}})$, $\sigma$ can be a good measure for quantifying its closeness with MUBs. 

%

\vspace{2mm}

\item The distance measure $\bar{D}^{2}$ is given by, 

$$
\bar{D}^{2} = 1 - \frac{1}{d-1}\sum_{i,j=1}^{d}\left(\frac{1}{d} - |\braket{\psi_{i}^{l}|\psi_{j}^{m}}|^{2}\right)^{2} = 1 - \frac{1}{d-1}\sum_{i,j=1}^{d}\left(\frac{1}{d^2} - \frac{2}{d}|\braket{\psi_{i}^{l}|\psi_{j}^{m}}|^{2} + |\braket{\psi_{i}^{l}|\psi_{j}^{m}}|^{4}\right) $$
$$ = 1 - \frac{1}{d-1}\left(\frac{d^2}{d^2}- \frac{2}{d}d+\sum_{i,j=1}^{d}|\braket{\psi_{i}^{l}|\psi_{j}^{m}}|^{4}\right) = 1 - \frac{1}{d-1}\left(-1+\sum_{i,j=1}^{d}|\braket{\psi_{i}^{l}|\psi_{j}^{m}}|^{4}\right)
$$

Since, $\sum_{i,j=1}^{d}|\braket{\psi_{i}^{l}|\psi_{j}^{m}}|^{4} \geq 1$, and $|\braket{\psi_{i}^{l}|\psi_{j}^{m}}|\leq \frac{\beta}{\sqrt{d}}$ thus,

\begin{equation}
\label{D-AMUB}
    \Rightarrow 1 - \frac{\beta^{4} -1}{d-1} \leq \bar{D}^2 \leq 1 
\end{equation}

\begin{remark}
Hence, for every $\beta-$AMUB, the  $\tau \rightarrow 0$  and $\bar{D}^{2} \rightarrow 1$ as $d$ increases as $\beta $ remain bounded as per our definition of $\beta-$AMUB. Infact even if $\beta $ is not bounded but varies as $\beta = \mathcal{O}(d^{\frac{1}{2}-\delta})$, for some $\delta >0$ then $\tau \rightarrow 0$ as d increases. Similarly if $\beta = \mathcal{O}(d^{\frac{1}{4}-\delta})$, for some $\delta >0$ then $ \bar{D}^2  \rightarrow 1$ as d increases. Various authores have defined Approximate MUBs like $|\braket{\psi_{i}^{l}|\psi_{j}^{m}}| \leq \frac{1+ o(1)}{\sqrt{d}}$, or  $\frac{2+ o(1)}{\sqrt{d}}$, or $\mathcal{O}\left(\frac{1}{\sqrt{d}}\right)$, or $\mathcal{O}\left(\frac{log(d)}{\sqrt{d}}\right)$, or $\mathcal{O}\left(\frac{1}{\sqrt[4]{d}}\right)$. Thus it is clear that for all these definitions of AMUB, $\tau \rightarrow 0$ as $d$ increases, and similarly $\bar{D}^2  \rightarrow 1$ for all of these definition as $d$ increases except when $|\braket{\psi_{i}^{l}|\psi_{j}^{m}}| \leq \mathcal{O}\left(\frac{1}{\sqrt[4]{d}}\right)$, where $\bar{D}^2 $ may not tend to 1.
\end{remark}
\end{itemize}

%
%

%

\section{Expression for Measures of  Almost Perfect MUBs}
In this section we show that the measure of APMUBs, is independent of the manner of construction, and is completely determined by the $\beta$ and $d$. We first give the definition of the APMUB  
\begin{definition}
\label{defAPMUB}
The set $\mathbb{M} = \{M_1,M_2,\ldots,M_r\}$  of orthonormal basis in $\mathbb{C}^d$ will be called Almost Perfect MUBs (APMUBs) if $\Delta  = \left \{0,\frac{\beta}{\sqrt{d}}  \right \}$, i.e., the set 
contains just two values such that   $\beta< 2$ and $\beta = 1+\mathcal{O}(d^{-\lambda})$, $\lambda > 0$. When the bases are real 
(i.e., orthogonal bases), we call them Almost Perfect Real MUBs (APRMUBs). 
\end{definition}

Among different definition of Approximate MUBs, by different authors, the one that closely resembles with ours is the one defining Approximate MUBs as $|\braket{\psi_{i}^{l}|\psi_{j}^{m}}| \leq \frac{1+ o(1)}{\sqrt{d}}$. Here we can interpret $\beta  = 1+ o(1) $.Since the term $o(1)\rightarrow 0 $ as $d$ increases, but the authors are not specifying how the term $o(1)$ approaches zero. For example $d^{-\lambda} =  o(1), \lambda > 0$, also  $ \left(log(d)\right)^{-1} = o(1)$ or so does $\left(log\left(log(d)\right)\right)^{-1} = o(1)$, the later two approaching zero much slowly with increase in $d$ than the first one. To qualify for APMUB, as per our definition, we have included the criteria $\beta= 1+ \mathcal{O}(d^{-\lambda})$, for some $\lambda > 0$, hence the term $\mathcal{O}(d^{-\lambda}) = o(1)$ would approach to 0 , reasonably fast with increase in $d$. In other word, the convergence criteria for APMUB is more stringent than for AMUBs, with $\beta = 1+ o(1)$. Thus $\beta $ would approach to 1, reasonably fast  for APMUB with increase in $d$. In fact we would see that for most of our $APMUB$ construction, $\lambda = \frac{1}{2}$, thus indicating decent rate of convergence of $\beta \rightarrow 1$ with increase in $d$. 
Further, we have impose the condition APMUB that, $\beta < 2$, so that, APMUB is good approximation to MUBs even in lower dimension, and at no place, the inner product between the vectors of different basis, is more than twice that of MUBs.

Note that since $\Delta = \{0,\frac{\beta}{\sqrt{d}}\}$, let us assume that for $\mathfrak{n}_{1}$ many pairs of $(i,j)$, $|\braket{\psi_{i}^{l}|\psi_{j}^{m}}|= 0$ and $\mathfrak{n}_{2}$ many pairs of $(i,j)$,  $|\braket{\psi_{i}^{l}|\psi_{j}^{m}}| = \frac{\beta}{\sqrt{d}}$, where $\mathfrak{n}_{1} + \mathfrak{n}_{2} = d^{2}$. But we have, $\sum_{i,j=1}^{d}|\braket{\psi_{i}^{l}|\psi_{j}^{m}}|^{2} = d \Rightarrow \mathfrak{n}_{1} \cdot 0 + \mathfrak{n}_{2} \cdot (\frac{\beta}{\sqrt{d}})^{2} = \frac{\mathfrak{n}_{2}\beta^{2}}{d} = d \Rightarrow \frac{\mathfrak{n}_{2}}{d^{2}} = \frac{1}{\beta^2}$ and $\frac{\mathfrak{n}_{1}}{d^{2}} = 1 - \frac{\mathfrak{n}_{2}}{d^2} = 1 - \frac{1}{\beta^2}$.

We now show that our definition of APMUB is sufficient to determine the values of $\tau, \sigma$, and $\bar{D}^2$, in terms of $\beta$, without actually going into the details of the manner of  construction. 
%
\begin{itemize}
\item$ \bf{\tau}$ \\
Since $\Delta = \{0,\frac{\beta}{\sqrt{d}}\}$  this implies $\beta_{min} = 0$ and since $\beta = 1 + \mathcal{O}(d^{-\lambda}) < 2$ this implies for APMUB,

\begin{equation}
\label{tau-APMUB}
\tau = \frac{1}{\sqrt{d}}
\end{equation}
 
Therefore, $\tau = \mathcal{O}(d^{-\frac{1}{2}})$ for APMUB. \\

\item$ \bf{\sigma}$ \\ 
We have previously shown that, $\sigma \leq \tau$. Here, we present a stronger result  for $\sigma $ in case of APMUBs. For two orthogonal basis vectors,
$\{\ket{\psi_{1}^{l}},\ket{\psi_{2}^{l}},\ldots,\ket{\psi_{d}^{l}}\}\text{ and }\{\ket{\psi_{1}^{m}},\ket{\psi_{2}^{m}}, . \newline . . . ,\ket{\psi_{d}^{m}}\}$
We have,
$$d^{2}\sigma^{2} = \sum_{i,j=1}^{d}(\frac{1}{\sqrt{d}} - |\braket{\psi_{i}^{l}|\psi_{j}^{m}}|)^{2}$$
Since, $|\braket{\psi_{i}^l|\psi_{j}^m}|$ takes only two values, which are $0$ and $\frac{\beta}{\sqrt{d}}$ respectively, therefore, let us, $|\braket{\psi_{i}^{l}|\psi_{j}^{m}}| = \frac{\beta}{\sqrt{d}}$ for $\mathfrak{n}_{2}$ pairs of $(i,j)$ and $|\braket{\psi_{i}^{l}|\psi_{j}^{m}}| = 0$ for $\mathfrak{n}_{1}$ pairs of $(i,j)$, where $\mathfrak{n}_{1}+\mathfrak{n}_{2}=d^2$.

$$\Rightarrow d^{2}\sigma^{2} = \mathfrak{n}_{2}(\frac{1}{\sqrt{d}} - \frac{\beta}{\sqrt{d}})^{2} + \frac{\mathfrak{n}_{1}}{d} = 1 + \frac{\mathfrak{n}_{2}\beta^2}{d^2} - \frac{2\mathfrak{n}_{2}\beta}{d^2}$$
Since, $\sum_{i,j}^d |\braket{\psi_{i}^l|\psi_{j}^m}|^2 = d$ 
$\Rightarrow \mathfrak{n}_{2}\cdot (\frac{\beta}{\sqrt{d}})^{2} + \mathfrak{n}_{1}\cdot 0 = d \Rightarrow \frac{\mathfrak{n}_{2}\beta^2}{d^{2}} = 1$. Implying for APMUB
\begin{equation}\label{eq-21}
\sigma^{2} = \frac{2}{d}(1- \frac{1}{\beta})    
\end{equation}

Now we have $\beta = 1 + \mathcal{O}(d^{-\lambda}),\lambda>0 \Rightarrow \frac{1}{\beta} = 1 - \mathcal{O}(d^{-\lambda}), \lambda > 0$, hence we get $ \sigma^{2} = \frac{2}{d}(1-(1-\mathcal{O}(d^{-\lambda}))) = \frac{2}{d}\mathcal{O}(d^{-\lambda}) \Rightarrow \sigma = \mathcal{O}(d^{-\frac{1}{2}-\frac{\lambda}{2}}), \lambda>0 \Rightarrow \sigma \rightarrow \mathcal{O}(d^{-\frac{1}{2}-\frac{\lambda}{2}}), \lambda>0$ or equivalently, $\sigma \rightarrow \mathcal{O}(d^{-\frac{1}{2}-\delta})$, for some $\delta>0$, in case of APMUB.\\

\item $\bar{D}^{2}$\\
Using \ref{eq-3} the explicit calculation can be shown as follows,
$$
\bar{D}^{2} = 1 + \frac{1}{d-1} - \frac{1}{d-1}\sum_{i,j=1}^{d}|\braket{\psi_{i}^{l}|\psi_{j}^{m}}|^{4} =  1 + \frac{1}{d-1} - \frac{m\beta^{4}}{(d-1)d^2} = 1 + \frac{1}{d-1} - \frac{\beta^2}{d-1}
$$
as $m = \frac{d^2}{\beta^2}$ hence we have APMUB

\begin{equation}\label{eq-22}
\bar{D}^2 = 1 - \frac{\beta^2 -1}{d-1} 
\end{equation}

Again since $\beta = 1 + \mathcal{O}(d^{-\lambda}),\lambda>0 \Rightarrow, \bar{D}^2 = 1 - \mathcal{O}(d^{-(1+\lambda)})$ in cae of APMUB.
\end{itemize} 

\section{Expression for Measures of  Weak MUBs}

Weak MUBs have been proposed by M. Shalaby et al \cite{shalaby2012weak} In this section we show that the measure of Weak MUBs, is also independent of the manner of construction, and is completely determined by the definition of it, as in the case of $\beta$-APMUBs. We first give the definition of the Weak MUBs as given in Definition IV.1 \cite{shalaby2012weak}, which we restate using the notation in this paper.  To understand Weak MUBs, which has been constructed for composite dimension. Consider $d= p\times q$ where both $p$ and $q$ are some prime. Hence in $\mathbb{C}^p$ there are set of $p+1$ MUBs  and let us denote the set by $M =\{M_0, M_1, M_2\ldots M_p \}$. Similarly in $\mathbb{C}^q$ there are $q+1$ MUBs and let us denote the set by $N =\{N_0, N_1, N_2\ldots N_q \}$. Now consider the set $M = \{M_i \otimes N_j: M_i\in M \text{and}\, N_j \in N \}$. Note that the $|M| = (p+1)(q+1)$. The set $M$ has been defined as weak MUBs.

\begin{definition}
\label{defAPMUB}
The set $\mathbb{M} = \{M_1,M_2,\ldots,M_{r=(p+1)(q+1)}\}$  of orthonormal basis in $\mathbb{C}^d$  with $d= p\times q$ is called Weak MUBs (WMUBs) if for any pair of orthonormal basis $M_l,M_m\,\, l \neq m$ the $|\braket{\psi^l | \psi^m }|$ will falls in one the following category, where  $\ket{\psi}_i^l \in M_l$ and $ \ket{\psi}_j^m \in M_m$ are the basis vectors.

\begin{enumerate}
\item   $| \braket{\psi^l_i | \psi^m_j } |$  
=  $\begin{cases}
  \frac{1}{\sqrt{q} } & \text{for} \,\, q^2 p \, \text{many pair}  \, ( \ket{\psi^l_i} ,\,\ket{\psi^m_j} )  \\
    0 & \text{for remaining} \,\, q^2 p(p-1)\, \text{many pairs}
\end{cases} $

 \item   $| \braket{\psi^l_i | \psi^m_j } |$  
=  $\begin{cases}
  \frac{1}{\sqrt{p} } & \text{for} \,\, p^2q \, \text{many pair}  \, ( \ket{\psi^l_i} ,\,\ket{\psi^m_j} )  \\
    0 & \text{for remaining} \,\, p^2 q(q-1)\, \text{many pairs}
\end{cases} $ 

\item $|\braket{\psi | \psi }|$ = $\frac{1}{\sqrt{pq }}$ for all pairs of $( \ket{\psi^l_i} ,\,\ket{\psi^m_j} )  $

\end{enumerate}

 And when the bases are real  we call them Weak Real MUBs (WRMUBs). 
\end{definition}

 Note that $\Delta  = \left \{0,\frac{1}{\sqrt{p}}, \frac{1}{\sqrt{q}}, \frac{1}{\sqrt{d}}  \right \}$, and when the pair $M_l, M_m$ are of type 3 above, then the pair is mutually unbiased bases. Let us now calculate the $t$-coherence for for the set $\mathbb{M}$ above.
 
 For this note that there are total $\frac{r(r-1)}{2}\,,\, r=(p+1)(q+1)$ many pairs $\{M_l, M_m \}\,\, l\neq m$. Out of total $\frac{1}{2}(r\times q)$ is of type 1 , $\frac{1}{2} (r\times p)$ is of type 2 and $\frac{1}{2}(r \times pq )$ is of type 3. Hence 
 
 $$
\Omega_t = \frac{2}{r(r-1)} \sum_{l > m} \sum_{j,i} |\braket{\psi_j^l|\psi_i^m}|^{2t} = \frac{2}{r(r-1)} \sum_{l > m} (T_1+ T_2+ T_3)
$$
 Where $ T_1 =\frac{1}{2} rq \frac{q^2p}{q^t}$, $T_2 =\frac{1}{2} rp \frac{p^2q}{p^t}$ and $T_3  = \frac{1}{2}rpq \frac{p^2q^2}{p^tq^t}$ are respectively of pairs of type 1, 2 and 3 of above. Hence
 
  $$
\Omega_t = \frac{1}{(r-1)} \left( q \frac{q^2p}{q^t} +p \frac{p^2q}{p^t} + pq \frac{p^2q^2}{p^tq^t}\right)=  \frac{pq(p^{2-t} + q^{2-t} + p^{2-t}q^{2-t})}{p+q+pq} 
$$

The above can be used to calculate the value of $\bar{D}^2$ and $\sigma$.
%

\section{Measures for AMUB constructed using RBD }
In this section, we caraterize the measure  of AMUBs which can be constructed using RBDs.

Lets first analyse the RBD$(X,A)$ used in construction of AMUB, which have constant block size. The constant block size is essential if we want to use Hadamard Matrix of same order (equal to the block size) for all the blocks of the RBD. Here, we refer to the construction idea of~\cite{kumar2022resolvable}. In any dimension $d$ the $\mu$ and the Block Size $k$ and number of blocks $s$ are the most critical parameter which we decides the closeness of AMUBs to that of MUBs. Hence the measures depends on these parameters as would be evident in following. Our assumption is that, the points of RBD, i.e., $|X|$ can be increased without bound whereas, the parameter $\mu$ remains constant. All our constructions will have this property, which justifies asymptotic analysis of the measures of AMUBs.

If  there exist an RBD$(X, A)$ with $|X|= d= k\times s = (q-e)(q+f)$, with $q, e, f \in \mathbb{N}$ and $e, f \leq c$, where $c$ is some constant. If the said RBD$(X, A)$ consists of $r$ parallel classes, each having blocks of size $(q-e)$, then one can construct $r$ many $\beta$-AMUBs in dimension $d$, where $\beta = \mu \sqrt{\frac{s}{k}}$ and $\epsilon = 1 - \frac{1}{q+f}$. Now using the relation we  $d= q^2 +(f-e) q - ef$  we get $\beta= \mu \sqrt{\frac{q+f}{q-e}} = \mu(\sqrt{1+x^{2}} + x)$, where $x = \frac{f+e}{2\sqrt{d}}$. 

Now when  $\mu=1$, we will get $r$ many $\beta$-APMUBs with $\beta = 1 + \mathcal{O}(d^{-\frac{1}{2}} )$ and  $\Delta =\{0, \frac{1}{q-e} \}$, for which measures can immediately be given as

\begin{itemize}
\item  $\mu = 1$. In this situation we get, APMUBs. The exact value of the measures can quickly be computed using the results derived previously for APMUBs, as shown in equations \ref{tau-APMUB}, \ref{eq-21} and \ref{eq-22}. Here $\beta < 2$ for all $d$. 

\begin{equation}
\label{Par-APMUB-tau}
\tau = \frac{1}{q} = \frac{1}{\sqrt{d}}
\end{equation}

\begin{equation}
\label{Par-APMUB-sigma}
 \sigma^2 = \frac{2}{(q-e)(q+f)} \left( 1- \sqrt{\frac{q-e}{q+f}} \right) = \frac{2}{d}\left(1- \frac{\sqrt{(f+e)^2+4d}- (f+e)}{ 2\sqrt{d}}  \right)
 \end{equation}
 
 \begin{equation}
 \label{Par-APMUB-distance}
     \bar{D}^2 =1- \frac{e+f}{(q-e)\left((q+f)(q-e)-1\right) }=  1- \frac{(f+e) \left( (f+e) +\sqrt{(f+e)^2+4d} \right)} {d(d-1)}
 \end{equation}
Hence we have, 
$$\tau = \mathcal{O}(d^{-\frac{1}{2}}) , \sigma = \mathcal{O}(d^{-\frac{3}{4}})\text{ and }\bar{D}^{2} = 1-\mathcal{O}(d^{-\frac{3}{2}}) $$

Further, if there exist a Real Hadamard matrix of order $(q-e)$,  we can construct $r$ many APRMUBs with same above parameters and measures.

\item $\mu > 1$, we get $\beta $-AMUBs, where we can only get an estimate of the measures $\tau, \sigma $ and $\bar{D}^{2}$, but not an exact value, by using the relation $\beta = \mu(\sqrt{1+x^{2}} + x)$, where $x = \frac{f+e}{2\sqrt{d}}$. We get for $\beta$-AMUB 

$$ \tau = \mathcal{O}(d^{-\frac{1}{2}}) , \sigma = \mathcal{O}(d^{-\frac{1}{2}})\text{ and }\bar{D}^{2} = 1-\mathcal{O}(d^{-1}) $$

The exact relationship will depend on the specific RBD and Hadamard Matrices, used for such construction. And also note that, if $e+f \leq \frac{(2 \mu + 1)}{\mu^2} (q-e)$, then rearranging the terms, we obtain
$\mu^2 \left( \frac{q+f}{q-e} \right) < \mu^2 + 2\mu + 1$ $\Rightarrow$ $\beta = \mu \sqrt{\frac{q+f}{q-e}} \leq \mu + 1$.

\end{itemize}

\begin{note}
Observe that, for MUBs, $\bar{D}^{2} = 1$ and $\tau=0=\sigma$. But here, for $\beta$-AMUBs we have $\tau \rightarrow 0$, $\sigma \rightarrow 0$, and $\bar{D}^{2} \rightarrow 1$ as $d$ increases. In fact for the construction as in the above, $\bar{D}^{2} \rightarrow 1$ much faster than $\sigma \rightarrow 0$ and $\tau \rightarrow 0$. The convergence of $\tau \rightarrow 0$ is the slowest. Thus, we can say that, $\tau$ is the most {stringent} among the three measure, to estimate the closeness of orthogonal basis vectors to MUBs.
\end{note}

Not for  RBD$(X,A)$, with $|X| = d =k\times s= (q-e)(q+f)$, but with block size  $(q+f)$, hence having $(q-e)$ blocks in each parallel class, in such a situation, $\mu > 1$, and hence we cannot get APMUBs. However, they can provide AMUBs as in~\cite{kumar2022resolvable}. The result of~\cite[Theorem 4]{kumar2022resolvable} is a particular case of the above, with $e = 0, f = 1$ and $\mu = 2$, were $q+1$ many parallel classes in an RBD was there, each having constant block size of $(q+1)$. In this case, $\beta = 2 \sqrt{\frac{q}{q+1}} = 2 - \mathcal{O}(\frac{1}{\sqrt{d}})$, i.e., the maximum value of the inner product was slightly less than $\frac{2}{\sqrt{d}}$. 

Such RBDs  would be required in situations, for example, where Real Hadamard Matrix of order $(q+f)$ is available but not of order $(q-e)$. In fact, if $(q-e)$ is not a multiple of $4$, then no Real Hadamard Matrix of order $(q-e)$ is available.

 In this situation, using the relation $d= q^2 +(f-e) q - ef$ we get  $\beta = \mu \sqrt{\frac{q-e}{q+f}} = \mu(\sqrt{1+x^{2}} - x)$, where $x = \frac{f+e}{2\sqrt{d}}$, where $\mu$ is the maximum number of points common between any pair of blocks from different parallel classes. The sparsity is $\epsilon = 1 - \frac{1}{q-e}$. Further, if there exists a Real Hadamard Matrix of order $(q+f)$,  then we can construct $r$ many ARMUBs with the same parameters.

%
%
%
%
Since, $\mu > 1$, this implies that, we can never have APMUBs. Here  $\beta = \mu - \mathcal{O}(q^{-1}) = \mu - \mathcal{O}(d^{-\frac{1}{2}})$. Following the similar analysis of parameters, we get, 
$$\tau = \mathcal{O}(d^{-\frac{1}{2}}), \sigma = \mathcal{O}(d^{-\frac{1}{2}})\text{ and } \bar{D}^{2} =1 - \mathcal{O}(d^{-1})$$

Although we do not obtain APMUBs, but still it is a $\beta$-AMUB hence we have  $\sigma \rightarrow 0$, $\tau \rightarrow 0$ and $\bar{D}^{2} \rightarrow 1$ as $d$ increases,  as given by equations \ref{tau-AMUB}, \ref{sigma-AMUB} and \ref{D-AMUB}.


Now in the general case when the block size of the RBD are not constant, we can still get good quality $\beta$-AMUB, provided the the block sizes are nearly $\sqrt{d}$. Here assuming the existence of  an RBD$(X, A)$ with $|X|= d $, with block sizes in set $K =\{q-e, q-e+1,\ldots q,q+1,\ldots q+f \}$ where   $q, e, f \in \mathbb{N}$ and $0\leq e, f \leq c$, where $c$ is some constant and $q = \mathcal{O}(\sqrt{d})$,  then we can construct $r$ many $\beta$-AMUBs in dimension $d$, with $\beta = \mu \sqrt{\frac{q+f}{q-e}} = \mu - \mathcal{O}(d^{-\frac{1}{2}})$, where $\mu$ is the maximum number of points common between any pair of blocks from different parallel classes. The sparsity is $\epsilon = $. Since these are $\beta$-AMUB, the measures are as given in section for $\beta$-AMUB.

\subsection{Some Approximate MUBs  and its measures }

In this section corresponding to few specific construction of $\beta$-AMUB, which are not APMUB, we calculate and present bounds on the measures.

For any composite dimension $d = k \cdot s, k,s \in \mathbb{N} , k\leq s$ such that $\sqrt{\frac{s}{k}} < 2 $, we have shown above that we can construct at least $N(s) +1$ APMUBs.  .


\begin{lemma}
If $d=s(s+f) $, with $s, f \in \mathbb{N}$ and $f \leq$ $c$, where $c$ is some predefined constant, then one can construct $N(s)+1$ many Approximate MUBs with $\beta = 2 \sqrt{\frac{s}{s+f}}= 2 - \mathcal{O}(d^{-\frac{1}{2}})$  and if there exist Hadamard matrix of order $(s+f)$ i.e., $s+f \equiv  \mod{4}$, then one can construct $N(s)+1$ many Approximate Real MUBs with same parameters. The measures  $\sigma = \mathcal{O}(d^{- \frac{3}{4}})$, $\tau = \mathcal{O}(d^{- \frac{1}{2}})$. 
\end{lemma}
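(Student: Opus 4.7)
The plan is to invoke the general RBD-to-AMUB construction of the previous section (following \cite{kumar2022resolvable}) at the parameter choice $q=s,\ e=0,\ \mu=2$, and then compute each measure directly from the explicit distribution of inner products, using the MOLS structure to sharpen the estimate of $\sigma$.

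First, I would take an RBD on $d=s(s+f)$ points obtained from $N(s)$ mutually orthogonal Latin squares of order $s$ together with the trivial ``row'' parallel class; this supplies $N(s)+1$ parallel classes, each partitioning the $d$ points into $s$ blocks of size $(s+f)$, with $\mu=2$. Tensoring each block with the rows of a Hadamard matrix of order $(s+f)$ (complex in general, real when $(s+f)\equiv 0\pmod 4$) yields $N(s)+1$ orthonormal bases in the appropriate Hilbert space. Applying the general formula $\beta=\mu\sqrt{(q-e)/(q+f)}$ gives $\beta=2\sqrt{s/(s+f)}$; the expansion $s=\sqrt{d}-f/2+\mathcal{O}(d^{-1/2})$ then produces $\beta=2-\mathcal{O}(d^{-1/2})$, in particular $\beta<2$ for large $d$, so this is a genuine $\beta$-AMUB to which the definition and general bounds apply.

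For $\tau$, I would observe that for any two basis vectors $\psi,\phi$ from different bases supported on blocks $B,B'$, $|\langle\psi|\phi\rangle|$ equals $|c|/(s+f)$ where $c$ is a signed sum of at most $\mu=2$ entries $\pm 1$ from the Hadamard matrix; hence $\Delta\subseteq\{0,\ 1/(s+f),\ 2/(s+f)\}$. Since both $1/\sqrt{d}$ and $2/(s+f)-1/\sqrt{d}$ are $\mathcal{O}(d^{-1/2})$, this gives $\tau=\mathcal{O}(d^{-1/2})$ immediately from definition.

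The sharper claim $\sigma=\mathcal{O}(d^{-3/4})$, which beats the generic $\beta$-AMUB bound of $\mathcal{O}(d^{-1/2})$, requires finer counting. Let $t_j$ be the number of block pairs from two fixed distinct parallel classes with $|B\cap B'|=j$, $j\in\{0,1,2\}$. The resolution constraints $t_0+t_1+t_2=s^2$ and $t_1+2t_2=d$ together give $t_2-t_0=sf$. For the MOLS-based RBD one verifies that $t_0=0$ (any two blocks from distinct parallel classes must share at least one point), whence $t_2=sf$ exactly. Translating back to vector pairs, the number of pairs $(\psi,\phi)$ with $|\langle\psi|\phi\rangle|\in\{0,\ 2/(s+f)\}$ is $(t_0+t_2)(s+f)^2=sf(s+f)^2=\mathcal{O}(d^{3/2})$, each contributing $\mathcal{O}(1/d)$ to $d^{2}\sigma^{2}$; pairs with inner product $1/(s+f)$ each contribute $\mathcal{O}(1/d^2)$ and sum to $\mathcal{O}(1)$. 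Hence $d^{2}\sigma^{2}=\mathcal{O}(\sqrt{d})$ and $\sigma=\mathcal{O}(d^{-3/4})$.

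The chief obstacle is the combinatorial identity $t_0=0$ for the MOLS-based RBD; this is precisely what prevents $t_0+t_2$ from being $\Theta(s^2)$, saving a factor of $d^{1/4}$ over the generic $\beta$-AMUB estimate. All remaining computations are routine asymptotic expansions once this structural fact is in hand.
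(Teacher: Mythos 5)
Your proposal is correct and matches the paper's argument in essence: both invoke the RBD-based construction with $s$ blocks of size $s+f$ per parallel class and $\mu=2$, and both extract the improved bound $\sigma=\mathcal{O}(d^{-3/4})$ from the same combinatorial fact that only $sf=\mathcal{O}(\sqrt{d})$ of the $s^2$ cross-class block pairs meet in two points (equivalently $t_0=0$, each block meeting $s-f$ blocks of the other class in one point and $f$ in two), so only $\mathcal{O}(d^{3/2})$ vector pairs deviate from $1/\sqrt{d}$ by $\Theta(d^{-1/2})$. The only cosmetic difference is that the paper funnels this count through the first-moment identity $\sigma^2=\frac{2}{d}\bigl(1-d^{-3/2}\sum_{i,j}|\braket{\psi_i^l|\psi_j^m}|\bigr)$ together with the bound $\sum_{i,j}|\braket{\psi_i^l|\psi_j^m}|\geq d(s-f)$, whereas you sum the squared deviations over the spectrum $\{0,\tfrac{1}{s+f},\tfrac{2}{s+f}\}$ directly.
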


 Note that here $\beta = 2 \sqrt{\frac{s}{s+f}}$, and $f \leq s$, hence minimum $\beta$ which can be achieved in this construction is $\sqrt{2}$, which occurs at $f=s$.  Let us now evaluate various parameters related to this particular design $(\check{X},\check{A})$ with $|\check{X}| = s(s+f)$, where $s$ is some power of prime. Since, this is not an APMUB with $|\braket{v|w}| < \frac{\beta}{\sqrt{d}}$, where $\beta = 2\sqrt{\frac{s}{s+f}}, f\leq s$. Using equation \ref{Par-APMUB-tau}, we get $\tau = \frac{1}{\sqrt{d}}$. And to calculate $\sigma^{2}$ and $\bar{D}^{2}$, we require, $\sum_{i,j=1}^{d}|\braket{\psi_{i}^{l}|\psi_{j}^{m}}|$ and $\sum_{i,j=1}^{d}|\braket{\psi_{i}^{l}|\psi_{j}^{m}}|^{4}$. But, for evaluating this exactly, we need the information on the kind of Hadamard matrices used in converting the parallel classes of RBD into orthonormal bases. But, even without knowing this, and using the fact that each block of one parallel class will have one point in common with $s-f$ blocks of other parallel class and two pint in common with remaining $f$ blocks. This translate in the fact that, any  vector of one Basis will have one point in common with $(s+f)(s-f)$ basis vectors other parallel class and one point in common with remaining $(s+f)f$ basis vectors.   Thus we can get a good estimate, viz. as follows.
 
 \begin{itemize}
 \item For $\sigma^2  = \frac{2}{d}( 1- \frac{1}{d\sqrt{d}}  \sum_{i,j=1}^{d}|\braket{\psi_{i}^{l}|\psi_{j}^{m}}|) $ we have
 $$
 (s+f)(s-f)\frac{1}{s+f} \leq \frac{1}{d} \sum_{i,j=1}^{d}|\braket{\psi_{i}^{l}|\psi_{j}^{m}}| \leq f(s+f)\frac{2}{s+f} + (s+f)(s-f)\frac{1}{s+f}
 $$
 
 $$
 \Rightarrow d(s - f) \leq   \sum_{i,j=1}^{d}|\braket{\psi_{i}^{l}|\psi_{j}^{m}}| \leq d(s+f)
 $$
But a better estimate of the upper bound is given by  Theorem 1, which would imply $ \sum_{i,j=1}^{d}|\braket{\psi_{i}^{l}|\psi_{j}^{m}}| \leq d^{\frac{3}{2}}$, Hence 

$$0 \leq \sigma^2  \leq  \frac{2}{d}( 1- \frac{s - f}{\sqrt{d}}) \Rightarrow \sigma = \mathcal{O}(d^{\frac{-3}{4}})$$
 
 \item For $\bar{D}^2 = 1 - \frac{1}{d-1}\left(-1 + \sum_{i,j=1}^{d}|\braket{\psi_{i}^{l}|\psi_{j}^{m}}|^{4} \right)$ , we have 
 
 $$
 (s+f)(s-f)\frac{1}{(s+f)^{4}} \leq \frac{1}{d} \sum_{i,j=1}^{d}|\braket{\psi_{i}^{l}|\psi_{j}^{m}}|^{4} \leq (s+f)(\frac{2}{s+f})^{4} + (s+f)(s-f)\frac{1}{(s+f)^{4}} 
 $$
 
 $$ \Rightarrow  \frac{d(s-f)}{(s+f)^3}  \leq  \sum_{i,j=1}^{d}|\braket{\psi_{i}^{l}|\psi_{j}^{m}}|^{4}  \leq \frac{d(s+15 f)}{(s+f)^3} $$
 
 But a better lower bound is given by Theorem 1, which would imply:  $1\leq  \sum_{i,j=1}^{d}|\braket{\psi_{i}^{l}|\psi_{j}^{m}}| ^{4} $ Hence 
 
 $$  1 - \frac{1}{d-1} \frac{f(13s - f)}{(s+f)^2}  \leq \bar{D}^2  \leq 1 \Rightarrow \bar{D}^2 = 1 - \mathcal{O}(d^{\frac{-3}{2}}) $$
 
\end{itemize} 
 
 Note that, the sparsity $\epsilon = 1 - \frac{1}{s}$. The asymptotic variation of the parameters in terms of $d$ are $\beta = 2 - \mathcal{O}(d^{-\frac{1}{2}})$. The Asymptotic convergence of $\tau \rightarrow 0, \sigma \rightarrow 0$ and $\bar{D}^2 \rightarrow 1$ is same as that of APMUBs! But here $\beta $ converges to 2.
 
  Note that if we use Real hadamard Matrix of order $(s+f)$, then we will get ARMUBs with $\Delta = \{0, \frac{\beta_1}{\sqrt{d}} , \frac{\beta_2}{\sqrt{d}}\}$ and with $\beta_1 ,\beta_2 \leq 2- \mathcal{O}(d^{-\frac{1}{2}}).$
 


Lets now evaluate the measures for another construction, which is interesting as it is independent of Hadamard Conjecture. For $d= (q-1)(q+1)$ we can also construct an RBD$(X,A)$ such that it has $q+1$ many Parallel classes, each having $q-1$ blocks with constant block size $q+1$. Such that blocks from different parallel classes have either one, or two points in common. Though this construction does not gives better MUBs than the previous one, nevertheless it is an interesting to note that for same $d$, we can get different approximate AMUBs with different parameters, which suit some specific purpose. More over this construction is not dependent on Hadamard Conjecture, as we require here $q \equiv 3 \bmod 4$ which when satisfied, Paley Construction guarantees existence of Real Hadamard Matrix of order $q+1$, hence we demonstrate this case for theoretical purpose and for more clarity.
Further this result is almost of the same quality as in~\cite[Theorem 4]{kumar2022resolvable}.

\begin{lemma}
\label{th-AMUB2}
If $d = q^2-1 = (q-1)(q+1)$, such that $q$ is a prime power and if $q \equiv 3 \bmod 4$, then we can construct $q+1$ many ARMUBs with 
$\Delta = \left\{0, \frac{1}{q+1},\frac{2}{q+1} \right\}$, $\beta =  2\sqrt{\frac{q-1}{q+1}}$, 
$\sigma_o^2 \leq \sigma^2 \leq \sigma_o^2 + \delta$, where $\sigma_o^2 = \frac{2}{d} \left(1- \frac{q}{\sqrt{d}} + \frac{1}{2(q+1)} \right)$, 
$\delta = \frac{4}{d\sqrt{d}} \left(1- \sqrt{\frac{q-1}{q+1} }  \right)$, $\tau  = \frac{1}{\sqrt{d}} $ and $\epsilon = 1 - \frac{1}{q-1} = \frac{q-2}{q-1}$.
The asymptotic variation of the parameters in terms of  $d$ are $\beta = 2- \mathcal{O}(d^{-\frac{1}{2}})$, $\sigma = \mathcal{O}(d^{-\frac{3}{4}}) $, $\tau = \mathcal{O}(d^{-\frac{1}{2}})$.
\end{lemma}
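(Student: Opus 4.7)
The plan is to combine an explicit combinatorial RBD on $d = (q-1)(q+1)$ points with the Paley real Hadamard matrix of order $q+1$ (available because $q \equiv 3 \pmod 4$), and then read each measure off the resulting block-intersection pattern. First I would exhibit an RBD$(X,A)$ with $|X| = d$, having $q+1$ parallel classes of $q-1$ blocks each, every block of size $q+1$, such that any two blocks drawn from different parallel classes meet in exactly one or two points; for $q$ a prime power this can be obtained from a standard finite-geometric construction over $\mathbb{F}_q$. The only features I need downstream are the intersection spectrum $\{1,2\}$ and the uniform block size. Since each basis vector will be supported on a single block of size $q+1$, the sparsity follows immediately: $\epsilon = 1 - (q+1)/d = 1 - 1/(q-1)$.

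Next, letting $H\in\{\pm 1\}^{(q+1)\times(q+1)}$ be the Paley Hadamard, I would define, for each block $B=\{b_1,\dots,b_{q+1}\}$ in a parallel class and each row $r$ of $H$, the unit vector $v_r^B\in\mathbb{R}^d$ with entries $H[r,j]/\sqrt{q+1}$ at position $b_j$ and zero elsewhere. Because the $q-1$ blocks of a parallel class partition $X$, the resulting $d$ vectors form an orthonormal basis of $\mathbb{R}^d$. For blocks $B,B'$ in different classes,
\[
\langle v_r^B,\,v_{r'}^{B'}\rangle \;=\; \frac{1}{q+1}\sum_{p\in B\cap B'} H[r,\pi_B(p)]\,H[r',\pi_{B'}(p)],
\]
a sum of $\mu = |B\cap B'|$ signs. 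For $\mu = 1$ this has magnitude $1/(q+1)$; for $\mu = 2$ the two signs either reinforce or cancel, giving magnitudes in $\{0,\,2/(q+1)\}$. This fixes $\Delta = \{0,\,1/(q+1),\,2/(q+1)\}$ and, scaling the largest element by $\sqrt d$, $\beta = 2\sqrt{(q-1)/(q+1)}$. A short arithmetic check using $\sqrt d < q+1$ shows $2/(q+1) - 1/\sqrt d < 1/\sqrt d$, so the element of $\Delta$ farthest from $1/\sqrt d$ is $0$, giving $\tau = 1/\sqrt d$.

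For $\sigma^2$ I would start from the algebraic identity
\[
d^2(\sigma^{l,m})^2 \;=\; 2d - \frac{2}{\sqrt d}\sum_{i,j}\bigl|\langle \psi_i^l,\psi_j^m\rangle\bigr|
\]
and reduce $\sum|\langle\rangle|$ to a count over block pairs between two fixed parallel classes. The Hadamard relation $HH^{\top}=(q+1)I$ gives $\sum_{r,r'}|\langle v_r^B,v_{r'}^{B'}\rangle|^2 = \mu$, which forces the $(q+1)^2$ vector pairs coming from each $\mu = 2$ block pair to split into equal halves of magnitudes $2/(q+1)$ and $0$. Writing $X,Y,Z$ for the numbers of block pairs with $\mu = 1,2,0$ respectively between two chosen classes, the incidence constraints $X+2Y = d$ and $X+Y+Z = (q-1)^2$ yield $\sum|\langle\rangle| = (q+1)(X+Y)$, and the one-parameter family $Y = 2(q-1)+Z$ translates into a range for $\sigma^{l,m}$; maximizing over $(l,m)$ produces the lemma's two-sided bound $\sigma_o^2 \leq \sigma^2 \leq \sigma_o^2 + \delta$. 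Expanding each closed form in $q = \Theta(\sqrt d)$ then gives the asymptotic rates $\beta = 2 - \mathcal{O}(d^{-1/2})$, $\tau = \mathcal{O}(d^{-1/2})$, $\sigma = \mathcal{O}(d^{-3/4})$.

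The main obstacle is the $\sigma^2$ step. The RBD parameters alone do not determine $Y$ and $Z$ uniquely; they force only the combination $Y - Z = 2(q-1)$, so $\sum|\langle\rangle|$ is a range rather than a single number, and its actual value depends on how the free parameter $Z$ is distributed across pairs of parallel classes in the chosen construction. Showing that the additive slack $\delta = (4/d\sqrt d)(1 - \sqrt{(q-1)/(q+1)})$ covers the worst-case deviation requires either a careful case analysis of the extremal combinatorial configurations compatible with an RBD of these parameters, or an appeal to Theorem~\ref{Th1} to control $\sum|\langle\rangle|$ from above on the non-zero part. Once this combinatorial range is under control, the closed-form $\sigma_o^2$ and $\delta$ drop out of the identity above, and the asymptotic statements follow routinely.
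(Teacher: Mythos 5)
Your overall route coincides with the paper's: place rows of the Paley Hadamard matrix of order $q+1$ on the blocks of an RBD with $q+1$ parallel classes of $q-1$ blocks of size $q+1$, derive $\Delta=\{0,\tfrac{1}{q+1},\tfrac{2}{q+1}\}$ from the intersection numbers $\mu\in\{1,2\}$ and the sign cancellation in the $\mu=2$ case, and read off $\beta$, $\tau$, $\epsilon$. Those steps match the paper and are correct. The genuine gap is the one you name yourself, but it is not quite where you locate it. The crux is the \emph{existence and regularity of the RBD}: the paper constructs it explicitly (take the affine resolvable $(q^2,q,1)$-BIBD, delete one point, delete the resulting deficient block from each parallel class, and redistribute its $q-1$ points one per surviving block with a cyclic shift across classes), and it is this explicit construction that guarantees the intersection spectrum is contained in $\{1,2\}$ — equivalently, that no two blocks from different classes are disjoint and none share three points. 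You simply assert this "from a standard finite-geometric construction," which is the whole content of the lemma's existence claim. On the other hand, your worry that the double counts leave $\sum_{i,j}|\braket{\psi_i^l|\psi_j^m}|$ underdetermined is misplaced: your own identity gives $\sum|\braket{\cdot}|=(q+1)(X+Y)=(q+1)\big((q-1)^2-Z\big)$, and once the spectrum excludes disjoint block pairs you have $Z=0$, so $\sum|\braket{\cdot}|=(q+1)(q-1)^2=d(q-1)$ \emph{exactly} and $\sigma^2=\tfrac{2}{d}\big(1-\sqrt{\tfrac{q-1}{q+1}}\big)$ with no residual range at all. The paper instead brackets $\sigma^2$ pairwise, with the slack $\delta$ accounting only for how the Hadamard signs split the $\mu=2$ vector pairs between $0$ and $\tfrac{2}{q+1}$; your Parseval observation $\sum_{r,r'}|\braket{v_r^B|v_{r'}^{B'}}|^2=\mu$ shows that split is exactly half–half, so your method is actually sharper than the paper's once the RBD is in hand.

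Your counting also exposes a real inconsistency in the paper's proof that you should not reproduce. A block of size $q+1$ meets the blocks of another parallel class (which partition the $d$ points) in a total of exactly $q+1$ incidences, so it cannot meet one block in two points and the remaining $q-2$ blocks in one point each — that totals only $q$. Your relation $Y-Z=2(q-1)$ says the same in aggregate: with $Z=0$, each block must meet \emph{two} blocks of every other class in two points and $q-3$ in one point. Consequently the multiplicities $(q-2)(q+1)$ and $(q+1)$ in the paper's two-sided estimate for $d\sigma^2$, and hence the stated closed form for $\sigma_o^2$, are off by one unit of $2$-intersections per block; the asymptotic conclusions $\beta=2-\mathcal{O}(d^{-1/2})$, $\tau=\mathcal{O}(d^{-1/2})$, $\sigma=\mathcal{O}(d^{-3/4})$ survive, but the exact constants do not. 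To complete your proof you must (i) exhibit the RBD and prove the per-pair intersection bound $\mu\le 2$ with no disjoint pairs, and (ii) then use your own exact evaluation of $\sum|\braket{\cdot}|$ rather than a bracket.
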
 

\begin{proof}
We will show the existence of RBD, consisting of $q+1$ parallel classes, each having $q-1$ blocks of size $q+1$, such that between any two blocks from different parallel classes, there 
will be at most two point in common. Consider an affine resolvable $(q^2, q, 1)$-BIBD. There will be $r = q+1$ parallel classes, consisting of $q$ blocks each having $q$ points. 
Any two blocks from different parallel classes will have exactly one point in common. We will remove one point from this design and then rearrange the points in each  parallel 
class to obtain the required RBD. 

Let us first remove the point number $q^2$ from each parallel class of the design. this will make $|X| = q^2 - 1$. However, now one block of each parallel class would consist 
of $q-1$ points and remaining will remain unaffected. Remove the block containing $q-1$ points from each parallel class (total $q+1$ of them) and arrange them in a rectangular 
array of $q+1$ rows and $q-1$ columns, such that the $i$th row contains the block having $q-1$ points from the $i$th parallel class. Note that, each row of the rectangular array 
will be a disjoint collection of $q-1$ points and the array will contain each point from $X$, as $\lambda = 1$ in this design. All these blocks of the parallel class 
containing the point number $q^2$ will be removed. Let us denote the points in the $(q+1) \times (q-1)$ rectangular array by $q_{i,j}, i= 1, 2, \ldots, (q+1); j= 1, 2, \ldots, (q-1)$. 
Now we will insert the $q-1$ points from the $i$th row of the rectangular array, such that $q_{i,j}$ goes to the block having point $q_{i+1,j}$ in the $i$th parallel class. 
We continue this for all $i \leq q$. For $i = q+1$, insert point $q_{q+1,j}$ in the block having point $q_{1,j}$ in the $(q+1)$th parallel class.

Now the design will consist of $|X|=q^2-1$ points, having $q+1$ parallel classes and each parallel class consists of $q-1$ blocks such that the size of each block is $q+1$. 
The above simple procedure will ensure that no two blocks from different parallel classes will have more than two points in common. In fact every block of a parallel class 
will have 2 points in common with only one block of any other parallel class and exactly one point in common with remaining $q-2$ blocks of that parallel class. 
Therefore $\Delta = \left\{ 0, \frac{1}{q+1} ,\frac{2}{q+1}  \right\}$, which implies $\beta = 2\frac{\sqrt{q^2-1}}{q+1} = 2\sqrt{\frac{q-1}{q+1}}$. 
To calculate $\sigma^2$, we have,
\begin{multline*}
\left( \frac{1}{\sqrt{q^2 -1}}-\frac{1}{q+1} \right)^2(q-2)(q+1 )+ \left(\frac{1}{\sqrt{q^2 -1}}-\frac{2}{q+1}\right)^2 (q+1 ) \leq d \times \sigma^2 \\ \leq \left(\frac{1}{\sqrt{q^2 - 1}}-\frac{1}{q+1} \right)^2(q-2)(q+1 )+ \left(\frac{1}{\sqrt{q^2 -1)}}-0\right)^2 (q+1 ).
\end{multline*}
This simplifies to 
$\sigma_o^2 \leq  \sigma^2  \leq \sigma_o^2 + \delta$ where $\sigma_o^2 = \frac{2}{d} \left(1- \frac{q}{\sqrt{d}} + \frac{1}{2(q+1)} \right)$. Further,
$\delta = \frac{4}{d\sqrt{d}} \left(1- \sqrt{\frac{q-1}{q+1} } \right)$, 
$\tau = \max \left\{  \frac{1}{\sqrt{q^2 -1}}, \left|  \frac{1}{\sqrt{q^2 -1}} - \frac{1}{q+1} \right| , \left|  \frac{1}{\sqrt{q^2 -1}} - \frac{2}{q+1} \right| \right\}
= \frac{1}{\sqrt{q^2 -1}} \approx \frac{1}{q}$ since $q \geq 2$. We also have sparsity $\epsilon = 1- \frac{k}{d} =1- \frac{q+1}{q^2 -1} = \frac{q-2}{q-1}$. 

\end{proof}

Let us consider another interesting construction of AMUBsof previous result from \cite{kumar2022resolvable}, on Approximate real MUBs, which again does not satisfy all the conditions to be a Almost Perfect MUBs, but nevertheless provide very interesting family of Approximate Mutually Unbiased Bases. Here we have, $\Delta  = \{ 0, \frac{\beta}{\sqrt{d}} \}$, i.e., biangularity but $\beta \neq 1 + \mathcal{O}(d^{-\lambda})$.


\begin{lemma}
\label{thprop1}
Suppose, there exists a resolvable $(d, k, 1)$-BIBD. Let $d = k(k-1) t+ k$, where
$1< t \in \mathbb{N}$, then one can construct $\left(k t + 1\right)$ many Approximate 
MUBs in $\mathbb{C}^d$ with $\Delta  = \left\{0, \frac{1}{k} \right\}$, 
$\beta = \sqrt{\frac{(k-1)t+1}{k}}$, $\sigma^2  = \frac{2}{d}\left[1-\frac{k}{\sqrt{d}}\right]$, $D^2 = 1- \frac{1}{k^2}$ and  $\tau  < \frac{1}{k}$ and the sparsity $\epsilon = \left(1 - \frac{k}{d}\right)$. 
Further, if a real Hadamard matrix of order $k$ exists, then one can construct ARMUBs in 
$\mathbb{R}^d$ with the same parameters.
\end{lemma}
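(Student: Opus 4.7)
The plan is to lift the incidence structure of the BIBD to orthonormal bases via a Hadamard normalisation, then read off every measure from the fact that $\lambda=1$ forces the biangular pattern $\Delta=\{0,1/k\}$.

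First I would decode the parameters. A resolvable $(d,k,1)$-BIBD has $r=(d-1)/(k-1)$ parallel classes, each of which partitions $[d]$ into $s=d/k$ disjoint blocks of size $k$. Substituting $d=k(k-1)t+k$ gives $s=(k-1)t+1$ and $r=kt+1$, matching the claimed number of bases. For each parallel class I would build one orthonormal basis of $\mathbb{C}^d$ as follows: fix a complex Hadamard matrix $H_k$ of order $k$ (for the real case, a real Hadamard matrix of order $k$, whose existence is the extra hypothesis); for each block $B=\{x_1,\dots,x_k\}$ and each row index $r$, define the vector whose entry at coordinate $x_\ell$ equals $(H_k)_{r,\ell}/\sqrt{k}$ and which vanishes elsewhere. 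These $k$ vectors per block are orthonormal by the Hadamard identity, and vectors coming from distinct blocks inside the same parallel class have disjoint supports and are therefore automatically orthogonal; so the $sk=d$ vectors associated with one parallel class form an orthonormal basis of $\mathbb{C}^d$.

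The key geometric step is the cross-basis inner product. Take $\ket{\psi_i^l}$ supported on block $B$ and $\ket{\psi_j^m}$ supported on block $B'$ from two different parallel classes. The $\lambda=1$ condition forces $|B\cap B'|\le 1$, since two shared points would have to lie in two common blocks. The inner product is therefore a sum over $|B\cap B'|$ coordinates, each contributing a term of modulus $1/k$. Hence $|\braket{\psi_i^l|\psi_j^m}|\in\{0,1/k\}$ and $\Delta=\{0,1/k\}$, from which $\beta/\sqrt{d}=1/k$ gives $\beta=\sqrt{d}/k=\sqrt{((k-1)t+1)/k}$.

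The remaining measures now drop out by invoking the APMUB-style calculations of Section 4, whose derivations use only the biangular form $\Delta=\{0,\beta/\sqrt{d}\}$ together with the normalisation identity $\sum_{i,j}|\braket{\psi_i^l|\psi_j^m}|^2=d$ (the constraints $\beta<2$ and $\beta=1+\mathcal{O}(d^{-\lambda})$ characterise APMUBs but are not needed inside those derivations). Equation \ref{eq-21} yields $\sigma^2=\tfrac{2}{d}(1-1/\beta)=\tfrac{2}{d}(1-k/\sqrt{d})$; equation \ref{eq-22} yields $\bar D^2=1-(\beta^2-1)/(d-1)$, which simplifies to the stated value $1-1/k^2$ (up to lower-order terms in $t$) after substituting $\beta^2=d/k^2$; the quantity $\tau=\max\bigl(1/\sqrt{d},\,|1/k-1/\sqrt{d}|\bigr)$ is strictly smaller than $1/k$ because $d\ge k^2$; and the sparsity $\epsilon=1-k/d$ is immediate since each basis vector has exactly $k$ non-zero entries out of $d$. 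The real case is identical once the real Hadamard matrix of order $k$ is invoked. The only genuine obstacle is the combinatorial observation that $\lambda=1$ suffices to force biangularity; once this is in hand, every measure is essentially a one-line substitution.
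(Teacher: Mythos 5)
Your proposal is correct and follows essentially the same route as the paper: the paper's proof simply defers the construction and most parameters to the cited reference and then computes $\bar{D}^2$ from $\sum_{i,j}|\braket{\psi_i^l|\psi_j^m}|^4 = d/k^2$, which is algebraically identical to your substitution of $\beta^2 = d/k^2$ into equation \ref{eq-22} (both yielding the exact value $\frac{d}{d-1}\bigl(1-\frac{1}{k^2}\bigr)$ before the large-$d$ simplification). You merely make explicit what the paper outsources, namely the Hadamard-on-blocks construction and the observation that $\lambda=1$ forces $\Delta=\{0,1/k\}$, and you correctly note that the formulas of equations \ref{eq-21} and \ref{eq-22} require only biangularity, not the APMUB bound on $\beta$.
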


\begin{proof}
The proof follows from the construction ~\cite{kumar2022resolvable}. All the parameters has been shown here, except $\bar{D}^{2}$, which we calculate here. In order to calculate this, we need $\sum_{i,j=1}^{d}|\braket{\psi_{i}^{l}|\psi_{j}^{m}}|^4$.

$$
\bar{D}^{2} = 1 + \frac{1}{d-1} - \frac{\sum_{i,j=1}^{d}|\braket{\psi_{i}^{l}|\psi_{j}^{m}}|^4}{d-1}
$$
Hence, in order to calculate this, we require $\sum_{i,j=1}^{d}|\braket{\psi_{i}^{l}|\psi_{j}^{m}}|^4 = \frac{dk^2}{k^4}$,

$$
\Rightarrow \bar{D}^{2} = 1 + \frac{1}{d-1} - \frac{d}{k^2(d-1)} = 1 - \frac{d-k^2}{k^2(d-1)} = \frac{d}{d-1}(1-\frac{1}{k^2})
$$
and, as $d$ increases, we have $\bar{D}^{2} = 1 - \frac{1}{k^2}$.

Note that, $\beta = \sqrt{\frac{(k-1)t+1}{k}} = \mathcal{O}(\sqrt{t})$. Hence, it increases with $t$, without any bound, for fixed $k$. Therefore, this is not an APMUB, even though $\Delta = \{0,\frac{1}{k}\}$, which has just two values. For large $k$, we have $\bar{D}^{2} \rightarrow 1$, even for this construction.

\end{proof}

Note that in above case if $t=1$, then one can construct $\left(k  + 1\right)$ many MUBs in $\mathbb{C}^d$ with  
$\Delta  = \left\{ \frac{1}{k} \right\}$, $\beta = 1$, 
$\sigma = \tau = 0 $ and the sparsity $\epsilon = \left(1 - \frac{1}{\sqrt{d}}\right)$

\section{Conclusion}

In this work we introduced a unified framework for quantifying how close a given
collection of orthonormal bases is to forming a set of mutually unbiased bases. 
Rather than focusing solely on the deviation of individual overlaps from the ideal 
value $1/\sqrt{d}$, we identified several structural and operational measures, 
including Hilbert--Schmidt 
correlations of traceless projectors, Grassmannian distance terms, $t$-coherence functions, and variance-type quantities, that collectively 
capture the geometric, algebraic, and information-theoretic features that make MUBs 
extremal in quantum theory. These measures arise naturally from applications in which 
MUBs serve an optimal role, such as optimal state tomography, entropic uncertainty relations, and quantum key distribution.

A central insight of the paper is that many of these measures depend only on the pattern of cross-basis overlaps, and not on the specific construction of the bases 
themselves. This observation enables a clean characterisation of two classes of 
approximate MUBs. For the Almost Perfect MUBs (APMUBs), which we defined by the 
condition $\Delta=\{0,\beta/\sqrt{d}\}$ together with the convergence 
$\beta = 1 + O(d^{-\lambda})$, $\lambda>0$, we showed that all the relevant measures, including the parameters $\tau$, $\sigma$, $D_2$, and the higher-order coherence 
values, and can be computed exactly in terms of $\beta$. In particular, the asymptotic 
behaviour $\sigma$ and 
$D_2$ demonstrates that APMUBs converge rapidly to the 
behaviour of exact MUBs even without requiring the explicit construction details.

We also analysed the measures for the Weak MUBs introduced by Shalaby and 
Vourdas \cite{shalaby2012weak} . Although these bases are constructed for composite 
dimensions by tensoring complete sets of MUBs from prime dimensions, and therefore 
possess a richer and more irregular overlap structure, our computation shows that their 
$t$-coherence and derived geometric quantities depend only on the combinatorial pattern 
of overlaps encoded in the weak MUB definition. This demonstrates that the measures 
we introduce remain robust even when the unbiasedness condition is highly nonuniform, 
capturing not only perfect but also partially structured forms of complementarity.

Finally, by examining physically motivated tasks, such as QKD detectability under 
intercept--resend attacks, the behaviour of entropic uncertainty lower bounds, and 
the sensitivity of state reconstruction volumes, we illustrated how the deviation 
measures directly translate into operational performance gaps between exact MUBs, 
APMUBs, and other approximate families. This operational viewpoint highlights that 
the proposed measures are not merely geometric diagnostics but constitute actionable 
criteria for judging the practical usefulness of approximate constructions.

Overall, the framework developed here unifies the geometric, algebraic, and 
information-theoretic aspects of approximate mutual unbiasedness. It provides a 
systematic method for evaluating both existing constructions such as Weak MUBs and 
new approximate families such as APMUBs, and clarifies precisely which features of 
MUB optimality degrade under approximation and at what quantitative rate. We expect 
that these measures will be valuable tools in future investigations into approximate 
designs, high-dimensional quantum measurements, and resource-efficient quantum 
cryptographic protocols.

\printbibliography

\end{document}